\documentclass[12pt]{article}

\def\colorful{0}

\oddsidemargin=-0.1in \evensidemargin=-0.1in \topmargin=-.5in
\textheight=9in \textwidth=6.5in
\parindent=18pt

\usepackage{amsthm,amsfonts,amsmath,amssymb,epsfig,color,float,graphicx,verbatim}
\usepackage{multirow}

\newif\ifhyper\IfFileExists{hyperref.sty}{\hypertrue}{\hyperfalse}
\hypertrue
\ifhyper\usepackage{hyperref}\fi

\usepackage{enumitem}
\usepackage[capitalize]{cleveref} 


\usepackage{framed}
\usepackage{nicefrac}

\def\nnewcolor{1}
\ifnum\nnewcolor=1

\fi
\ifnum\nnewcolor=0

\fi

\ifnum\colorful=1
\newcommand{\new}[1]{{\color{red} #1}}

\else
\newcommand{\new}[1]{{#1}}

\fi

\newtheorem{theorem}{Theorem}

\newtheorem{lemma}[theorem]{Lemma}
\newtheorem{proposition}[theorem]{Proposition}

\newtheorem{claim}[theorem]{Claim}

\theoremstyle{definition}
\newtheorem{definition}[theorem]{Definition}

\newcommand{\R}{\mathbb{R}}

\newcommand{\Z}{\mathbb{Z}}

\newcommand{\opt}{\mathrm{OPT}}
\newcommand{\poly}{\mathrm{poly}}
\newcommand{\polylog}{\mathrm{polylog}}

\newcommand{\dtv}{d_{\mathrm TV}}


\newcommand{\ignore}[1]{}

\newcommand{\eps}{\epsilon}

\newcommand{\eqdef}{\stackrel{{\mathrm {\footnotesize def}}}{=}}



\newenvironment{algorithm}[1][\  ] %
{ \rm
\begin{tabbing}
....\=.....\=.....\=.....\=.....\=  \+ \kill
} %
{\end{tabbing} }

\floatstyle{ruled}
\newfloat{Algorithm}{H}{alg}
\newfloat{Subroutine}{H}{sub}

{
\begin{minipage}{1.0\linewidth} \begin{algorithm} %
} { \end{algorithm} \end{minipage} }


\author{
Ilias Diakonikolas\thanks{Part of this work was performed while the author was at the University of Edinburgh.
Supported in part by EPSRC grant EP/L021749/1 and a Marie Curie Career Integration grant.}\\
University of Southern California\\
{\tt diakonik@usc.edu}\\
\and
Daniel M. Kane\thanks{Supported in part by NSF Award CCF-1553288 (CAREER).
Some of this work was performed while visiting the University of Edinburgh.}\\
University of California, San Diego\\
{\tt dakane@cs.ucsd.edu}\\
\and
Alistair Stewart\thanks{Part of this work was performed while the author was at the University of Edinburgh.
Supported by EPSRC grant EP/L021749/1.}\\ University of Southern California\\
{\tt alistais@usc.edu}
}

\title{Efficient Robust Proper Learning of Log-concave Distributions}

\begin{document}

\maketitle

\begin{abstract}
We study the {\em robust proper learning} of univariate log-concave distributions
(over continuous and discrete domains).
Given a set of samples drawn from an unknown target distribution, we want to compute a log-concave
hypothesis distribution that is as close as possible to the target, in total variation distance.
In this work, we give the first computationally efficient algorithm
for this learning problem. Our algorithm achieves the information-theoretically optimal
sample size (up to a constant factor), runs in polynomial time,
and is robust to model misspecification with nearly-optimal
error guarantees.

Specifically, we give an algorithm that,
on input $n=O(1/\eps^{5/2})$ samples from an unknown distribution $f$,
runs in time $\widetilde{O}(n^{8/5})$,
and outputs a log-concave hypothesis $h$ that (with high probability) satisfies
$\dtv(h, f) = O(\opt)+\eps$, where $\opt$ is the minimum total variation distance between $f$
and the class of log-concave distributions.
\new{Our approach to the robust proper learning problem is quite flexible and may be applicable
to many other univariate distribution families.}
\end{abstract}

\section{Introduction}  \label{sec:intro}

\subsection{Background and Motivation} \label{ssec:background}

Suppose that we are given a number of samples drawn from an unknown target distribution that belongs to (or is well-approximated by)
a given family of distributions ${\cal D}$. Our goal is to approximately estimate (learn) the target distribution in a precise way.
Estimating a distribution from samples is a fundamental unsupervised learning problem
that has been studied in statistics since the late nineteenth century~\cite{Pearson}.
During the past couple of decades, there has been a large body of work
in computer science on this topic with a focus on computational efficiency~\cite{KMR+:94short}.

The performance of a distribution learning (density estimation) algorithm is typically evaluated by the following criteria:
\begin{itemize}
\item {\em Sample Complexity:} For a given error tolerance, the algorithm should require a small number of samples,
ideally matching the information-theoretic minimum.
\item {\em Computational Complexity:}  The algorithm should run in time polynomial
in the number of samples provided as input.

\item {\em Robustness:} The algorithm should provide error guarantees under model misspecification,
i.e., even if the target distribution does not belong in the target family ${\cal D}$.
The goal here is to be competitive with the best approximation of the unknown distribution
by {\em any} distribution in ${\cal D}$.
\end{itemize}
In {\em non-proper} learning, the goal of the learning algorithm is to output
an approximation to the target distribution without any constraints on its representation.
In {\em proper} learning, we require in addition that
 the hypothesis is a member of the family ${\cal D}$.
Note that these two notions of learning are essentially equivalent in terms of sample complexity
(given any accurate hypothesis, we can do a brute-force search to find its closest distribution in ${\cal D}$),
but not necessarily equivalent in terms of computational complexity.

In many learning situations it is desirable to compute a proper hypothesis,
i.e., one that belongs to the underlying family ${\cal D}$. A proper hypothesis is
usually preferable due to its interpretability. In particular,
a practitioner may not want to use a density estimate, unless it is proper.
For example, one may want the estimate to have the properties of the underlying family,
either because this reflects some physical understanding of the inference problem,
or because one might only be using the density estimate as the first stage of a more involved procedure.

The aforementioned discussion raises the following algorithmic question:
{\em Can one obtain a {\em proper} learning algorithm for a given distribution family ${\cal D}$
whose running time matches that of the {\em best non-proper} algorithm for ${\cal D}$?}
Perhaps surprisingly, our understanding of this natural question remains quite poor.
In particular, little is known about the complexity of proper learning
in the unsupervised setting of learning probability distributions.
In contrast, the computational complexity of proper learning
has been extensively investigated in the supervised setting of PAC learning
Boolean functions~\cite{KearnsVazirani:94, Feldman15}, with several algorithmic and computational
intractability results obtained in the past decades.

In this work, we study the problem of {\em robust proper learning} for the family of
univariate log-concave distributions (over $\R$ or $\Z$) (see Section~\ref{ssec:results} for a precise definition).
Log-concave distributions constitute a broad non-parametric family
that is very useful for modeling and inference~\cite{Walther09}.
In the discrete setting, log-concave distributions encompass a range of fundamental
types of discrete distributions, including binomial, negative binomial,
geometric, hypergeometric, Poisson, Poisson Binomial, hyper-Poisson,
P\'{o}lya-Eggenberger, and Skellam distributions (see Section~1 of
\cite{BJR11}).  In the continuous setting, they include
uniform, normal, exponential, logistic, extreme value,
Laplace, Weibull, Gamma, Chi and Chi-Squared and Beta distributions (see~\cite{BagnoliBergstrom05}).
Log-concave distributions have been studied in a wide range of different contexts including
economics \cite{An:95}, statistics and probability theory (see~\cite{SW14-survey} for a recent survey),
theoretical computer science~\cite{LV07},
and algebra, combinatorics and geometry~\cite{Stanley:89}.

\subsection{Our Results and Comparison to Prior Work}  \label{ssec:results}

The problem of density estimation for log-concave distributions is of central importance
in the area of non-parametric shape constrained inference. As such,
this problem has received significant attention in the statistics literature,
see~\cite{Cule10a, DumbgenRufibach:09, DossW16, ChenSam13, KimSam14, BalDoss14, HW16} and references therein,
and, more recently, in theoretical computer science~\cite{CDSS13,
CDSS14, ADLS15, AcharyaDK15, CanonneDGR16, DKS16lcd}.
In this section, we state our results and provide a brief comparison to the most relevant prior work.
See Section~\ref{sec:related} for a a more detailed summary of related work.

We study univariate log-concave distributions over both continuous and discrete domains.

\begin{definition} \label{def:lc}
A function $f: \R \to \R_+$ with respect to Lebesgue measure is log-concave
if $f = \exp(\phi)$ where $\phi : \R \to [-\infty, \infty)$ is a concave function.
A function $f: \Z \to [0, 1]$ is log-concave if $f^2(x) \geq f(x-1) \cdot f(x+1)$ for all $x \in \Z$
and $f$ has no internal zeroes. We will denote by $\mathcal{LC}(D)$ the family of log-concave
densities over $D$.
\end{definition}

We use the following notion of agnostic learning under the total variation distance, denoted by $\dtv$:

\begin{definition}[Agnostic Proper Learning]  \label{def:learning}
Let ${\cal D}$ be a family of probability density functions on domain $D$.
A randomized algorithm $A^{\cal D}$ is an {\em agnostic distribution learning algorithm for $\cal D,$}
if for any $\eps>0,$ and any probability density function $f: D \to \R_+$, on input $\eps$ and sample access to $f$,
with probability $9/10,$ algorithm $A^{\cal D}$
outputs a hypothesis density $h \in \mathcal{D}$ such that
$\dtv(h, f) \leq O(\opt) + \eps$,
where $\opt \eqdef \inf_{g \in \mathcal{D}} \dtv(f, g)$.
\end{definition}

Given the above terminology, we can state our main algorithmic result:

\begin{theorem}[Main Result] \label{thm:main-final}
There exists an algorithm that, given $n = O(\eps^{-5/2})$ samples from an arbitrary density $f: D \to \R_+$,
where $D = \R$ or $D = \Z$,
runs in time $\widetilde{O}(n^{8/5})$ and outputs a hypothesis $h \in \mathcal{LC}(D)$ such that
with probability at least $9/10$ it holds
$\dtv(h, f) \leq O(\opt) + \eps$,
where $\opt \eqdef \inf_{g \in  \mathcal{LC}(D)} \dtv(f, g)$.
\end{theorem}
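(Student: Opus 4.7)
The plan is to combine a known non-proper agnostic learner for log-concave densities with a new ``proper projection'' subroutine that converts the non-proper hypothesis into a log-concave one without losing much in total variation distance. I would first invoke an off-the-shelf piecewise polynomial agnostic learner on $n = O(\eps^{-5/2})$ samples to produce a hypothesis $\tilde{h}$, representable as an $O(\eps^{-1/2})$-piecewise polynomial of constant degree, satisfying $\dtv(\tilde{h}, f) = O(\opt) + \eps$ with probability at least $19/20$. This step alone fixes the sample complexity at the claimed value and runs in quasi-linear time in $n$; all subsequent work is deterministic post-processing of $\tilde{h}$.

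The proper projection step rests on the following reduction. Let $g^\star \in \mathcal{LC}(D)$ satisfy $\dtv(g^\star, f) = \opt$. By the triangle inequality, $\dtv(g^\star, \tilde{h}) \le O(\opt) + \eps$; hence any log-concave $h$ with $\dtv(h, \tilde{h}) \le \dtv(g^\star, \tilde{h}) + \eps$ satisfies $\dtv(h, f) = O(\opt) + \eps$, which is what we need. Because optimizing $\dtv$ directly against an arbitrary log-concave density is intractable, I would instead use the proxy $d_{\mathcal{A}_k}(p, q) := \sup_{S \in \mathcal{A}_k} |p(S) - q(S)|$, where $\mathcal{A}_k$ is the class of unions of at most $k$ intervals, with $k = \Theta(\eps^{-1/2})$. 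A standard structural fact for this area is that for any log-concave $g$ and any $O(k)$-piecewise polynomial $p$ of constant degree, $\dtv(g, p) \le d_{\mathcal{A}_k}(g, p) + O(\eps)$, because $g - p$ has only $O(k)$ sign changes outside a set of negligible mass. It therefore suffices to find a log-concave $h$ that approximately minimizes $d_{\mathcal{A}_k}(h, \tilde{h})$.

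To make this minimization tractable, I would use the fact that every log-concave density is $\eps$-close in $\dtv$ to a piecewise log-linear density with $O(\eps^{-1/2})$ pieces, and restrict the search to such densities, parametrized by breakpoints and slopes of the concave function $\log h$. From $\tilde{h}$ I would extract coarse estimates of the effective support, the mode location, and the peak height of the target log-concave density, which lets me confine the breakpoints to a $\poly(1/\eps)$-sized grid and the slopes to a bounded range (subject to monotone decrease, enforcing concavity). For each candidate $h_\theta$ in this discretized family, $d_{\mathcal{A}_k}(h_\theta, \tilde{h})$ can be computed by a dynamic program over the combined breakpoints of $h_\theta$ and $\tilde{h}$ that selects the optimal union of at most $k$ intervals witnessing the discrepancy. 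Balancing the grid size against this evaluation cost is designed to yield the claimed $\widetilde{O}(n^{8/5})$ runtime.

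I expect the main obstacle to lie in this final step: making the discretized family of log-concave candidates simultaneously small enough to hit the runtime, rich enough to contain a near-minimizer of $d_{\mathcal{A}_k}(\cdot, \tilde{h})$, and structured so that log-concavity (a global monotonicity constraint on consecutive log-slopes) is preserved throughout the search. Because perturbing one slope can violate concavity, the optimization runs over a constrained polytope rather than a product of intervals; handling this likely demands a careful coarse-to-fine strategy that uses $\tilde{h}$ to prune infeasible regions and a convexity-preserving parametrization for local refinement. Verifying that the resulting $h$ satisfies $d_{\mathcal{A}_k}(h, \tilde{h}) \le O(\opt) + \eps$ (and hence the corresponding $\dtv$ bound) will require the structural fact above together with an existence argument showing that the discretized family contains a sufficiently good candidate.
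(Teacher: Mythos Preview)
Your high-level plan (run a non-proper piecewise-polynomial learner, then post-process into a log-concave hypothesis) is exactly what the paper does, and your sample-complexity accounting is correct. The gap is in the projection step: as you yourself flag, you have no algorithm for searching the discretized family of candidates. Parametrizing the candidate by $O(\eps^{-1/2})$ breakpoints and slopes, each drawn from a $\poly(1/\eps)$-sized grid, gives a search space of size $(1/\eps)^{\Theta(\eps^{-1/2})}$; evaluating $d_{\mathcal{A}_k}(\cdot,\tilde h)$ on each candidate by a dynamic program does not touch this exponent. The ``coarse-to-fine'' and ``convexity-preserving parametrization'' remarks in the last paragraph are aspirations, not an algorithm, and nothing in the proposal indicates how to turn the global concavity constraint into something compatible with a polynomial-size search.

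The paper resolves this with two ideas you are missing. First, it does \emph{not} use $O(\eps^{-1/2})$ adaptively placed pieces for the log-concave candidate; it uses $k=\widetilde{\Theta}(1/\eps)$ \emph{fixed, equally spaced} breakpoints $x_1,\dots,x_{k+1}$ determined only by a crude mean/scale estimate of $\tilde h$ (Lemma~4), so the only unknowns are the log-values $a_i=\log(h(x_i)\tilde\sigma)$. Second, it shows (Proposition~5) that one may take each $a_i$ from a set $S$ of size $\widetilde{O}(1/\eps^2)$ and each difference $a_{i+1}-a_i$ from a set $T$ of size $\widetilde{O}(1/\eps)$. With these two reductions, log-concavity becomes the \emph{local} constraint $a_{i+1}-a_i \le a_i-a_{i-1}$, and the objective $\|\tilde h-h\|_1$ decomposes additively over the intervals $[x_i,x_{i+1}]$. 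The minimization is then a shortest-path computation in a DAG whose vertices are states $(i,a,d)\in[k]\times S\times T$, giving the $\widetilde{O}(k\,|S|\,|T|)=\widetilde{O}(1/\eps^4)=\widetilde{O}(n^{8/5})$ runtime. Your $\mathcal{A}_k$-distance detour is not needed here (the paper works with $L_1$ directly, since $\tilde h$ is given explicitly) and would not by itself address the search problem.
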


We note that the sample complexity of our algorithm is optimal (up to constant factors),
as follows from previous work~\cite{DL:01, CDSS13}.
Our algorithm of Theorem~\ref{thm:main-final} is the first polynomial time
{\em agnostic proper} learning algorithm for the family of log-concave distributions. In particular,
previous polynomial time learning algorithms for log-concave distributions
were either {\em non-proper}~\cite{CDSS13, CDSS14, ADLS15} or {\em non-agnostic}~\cite{AcharyaDK15, CanonneDGR16}.
Specifically, the sequence of works~\cite{CDSS13, CDSS14, ADLS15} give computationally efficient agnostic learning algorithms
that are inherently non-proper. Two recent works \cite{AcharyaDK15, CanonneDGR16} give proper learning algorithms
for discrete log-concave distributions that are provably non-agnostic.
It should be noted that the sample complexity and running time of the non-robust proper algorithms
in~\cite{AcharyaDK15, CanonneDGR16} are significantly worse than ours. We elaborate on this point
in the following subsection.

\subsection{Related Work} \label{sec:related}

\paragraph{Distribution Learning.} 

Distribution learning is a paradigmatic inference problem
with an extensive literature in statistics (see, e.g., the books~\cite{BBBB:72, DG85, Silverman:86,Scott:92,DL:01}).
A number of works in the statistics community
have proposed proper estimators (relying on a maximum likelihood approach)
for various distribution families. Alas, typically, these estimators are either intractable
or their computational complexity is not analyzed.

A body of work in theoretical computer science has focused on
distribution learning from a computational complexity perspective; see, e.g.,
\cite{KMR+:94short,FreundMansour:99short, AroraKannan:01, CGG:02, VempalaWang:02,FOS:05focsshort, BelkinSinha:10, KMV:10,
DDS12soda, DDS12stoc, DDOST13focs, CDSS13, CDSS14, CDSS14b, ADLS15, DKS15, DaskalakisDKT15, DiakonikolasKS15a}.
We note that, while the majority of the literature studies either non-proper learning or
parameter estimation, proper learning algorithms have been obtained for a number of families,
including mixtures of simple parametric models~\cite{FOS:05focsshort, DK14, SOAJ14, LiS15a},
and, Poisson binomial distributions~\cite{DKS15b}.

\paragraph{Prior Work on Learning Log-concave Distributions.} 

Density estimation of log-concave distributions has been extensively investigated
in the statistics literature~\cite{DumbgenRufibach:09, GW09sc, Walther09, DossW16, BJRP13, ChenSam13, KimSam14, BalDoss14}
with a focus on analyzing the maximum likelihood estimator (MLE).
For the {\em continuous} case, the sample complexity of the problem has been characterized~\cite{DL:01},
and it is known~\cite{KimSam14, HW16} that the MLE is sample efficient.
It has been shown~\cite{DR11stat} that the MLE for continuous log-concave densities c
an be formulated as a convex program,
but no explicit upper bound on its running time is known.
We remark here that the MLE is known to be non-agnostic with respect to the total variation distance, 
even for very simple settings (e.g., for Gaussian distributions).

Recent work in theoretical computer science~\cite{CDSS13, CDSS14, ADLS15} gives sample-optimal, agnostic,
and computationally efficient algorithms for learning log-concave distributions (both continuous and discrete).
Alas, all of these algorithms are {\em non-proper}, i.e.,
they output a hypothesis that is not log-concave. For the case of {\em discrete} log-concave distributions supported on $[n]$,
two recent papers~\cite{AcharyaDK15, CanonneDGR16} obtain proper algorithms
that use $\poly(1/\eps)$ samples and run in time $\poly(n/\eps)$.
Roughly speaking,~\cite{AcharyaDK15, CanonneDGR16}
proceed by formulating the proper learning problem as a convex program.

Here we would like to emphasize three important differences between~\cite{AcharyaDK15, CanonneDGR16}
and the guarantees of Theorem~\ref{thm:main-final}.
First, the algorithms of~\cite{AcharyaDK15, CanonneDGR16} are {\em inherently non-agnostic}.
Second, their sample complexity is sub-optimal, namely $\Omega(1/\eps^5)$, while our algorithm is sample-optimal.
Third, the linear programming formulation that they employ has size (i.e., number of variables and constraints) $\Omega(n)$,
i.e., its size depends on the support of the underlying distribution.
As a consequence, the runtime of this approach is prohibitively slow, for large $n$.
In sharp contrast, our algorithm's running time is independent of the support size,
and scales sub-quadratically with the number of samples.

\subsection{Overview of our Techniques} \label{sec:techniques}

In this section, we provide a high-level overview of our techniques.
Our approach to the proper learning problem is as follows:
Starting with an accurate non-proper hypothesis, we fit a log-concave density to this hypothesis.
This fitting problem can be formulated as a (non-convex) discrete optimization problem that we can solve efficiently
by a combination of structural approximation results and dynamic programming.
\new{Specifically, we are able to phrase this optimization problem 
as a shortest path computation in an appropriately defined
edge-weighted directed acyclic graph.}

In more detail, our agnostic proper learning algorithm works in two steps:
First, we compute an accurate {\em non-proper} hypothesis, $g$, by applying any 
efficient non-proper agnostic learning algorithm as a black-box (e.g.,~\cite{CDSS14, ADLS15}).
In particular, we will use the non-proper learning algorithm of~\cite{ADLS15} that outputs 
a piecewise linear hypothesis distribution $g$.
To establish the sample-optimality of the~\cite{ADLS15} algorithm, 
one requires the following structural result that we establish (Theorem~\ref{approxThrm}):
Any log-concave distribution (continuous or discrete) can be $\eps$-approximated, in total variation distance,
by a piecewise linear distribution with $O(\epsilon^{-1/2})$ interval pieces.
Since $\Omega(\epsilon^{-1/2})$ interval pieces are required for such an approximation,
our bound on the number of intervals is tight. It should be noted that a quantitatively similar
structural result was shown in~\cite{CDSS14} for {\em continuous} log-concave distributions,
with a bound on the number of pieces that is sub-optimal up to logarithmic factors. For the discrete case,
no such structural result was previously known.


Since $g$ is not guaranteed to be log-concave, our main algorithmic step efficiently post-processes
$g$ to compute a log-concave distribution that is (essentially) as close to $g$ as possible, in total variation distance.
To achieve this, we prove a new structural result (Lemma~\ref{lem:struct}) 
showing that the \new{closest} log-concave distribution
can be well-approximated by a log-concave {\em piecewise exponential} distribution
whose pieces are determined only by the mean and standard deviation of $g$.
Furthermore, we show (Proposition~\ref{prop:this-form-close}) 
we can assume that the values of this approximation
at the breakpoints can be appropriately discretized. These structural
results are crucial for our algorithmic step outlined below.

From this point on, our algorithm
proceeds via dynamic programming. Roughly speaking, we record the best possible error
in approximating $g$ by a function of the aforementioned form on the interval $(-\infty,x]$ for various
values of $x$ and for given values of $h(x), h'(x)$. Since knowing $h(x)$ and $h'(x)$
is all that we need in order to ensure that the rest of the function is log-concave,
this is sufficient for our purposes. It turns out that this dynamic program
can be expressed as a shortest path computation in a graph that we construct.
The time needed to compute the edge weights of this graph depends on the 
description of the non-proper hypothesis $g$. In our case, $g$ is a piecewise linear distribution
and all these computations are manageable.

\subsection{Organization}
In Section~\ref{sec:prelims} we record the basic probabilistic ingredients we will require.
In Section~\ref{sec:results} we prove our main result.
Finally, we conclude with a few open problems in Section~\ref{sec:concls}.

\section{Preliminaries} \label{sec:prelims}

For $n \in \Z_+$, we denote $[n] \eqdef \{1,\dots,n\}$.
For $u \in \R$, we will denote $\exp(u) \eqdef e^u$.
Let $f: \R \to \R$ be a Lebesgue measurable function.
We will use $f(A)$ to denote $\int_{x \in A} f(x) dx$.

A Lebesgue measurable function $f: \R \to \R$ is a probability density function (pdf)
if $f(x) \geq 0$ for all $x \in \R$ and  $\int_{\R} f(x) dx = 1$.
We say that $f: \R \to \R$ is a pseudo-distribution if $f(x) \geq 0$ for all $x \in \R.$
A function $f: \Z \to \R$ is a probability mass function (pmf)
if if $f(x) \geq 0$ for all $x \in \Z$ and  $\sum_{x \in \Z} f(x) = 1$.
We will similarly use $f(A)$ to denote $\sum_{x \in A} f(x)$.
We say that $f: \Z \to \R$ is a pseudo-distribution if $f(x) \geq 0$ for all $x \in \Z.$

For uniformity of the exposition, we will typically use $D$ to denote the domain of our functions,
where $D$ is $\R$ for the continuous case and $\Z$ for the discrete case.
We will use the term {\em density} to refer to either a pdf or pmf.

The $L_1$-distance between $f, g: D \to \R$ over $I \subseteq D$, denoted
$\|f-g\|_{1}^{I}$, is  $\int_{I} |f(x) - g(x)| dx$ for $D = \R$ and
$\sum_{x \in I} |f(x) - g(x)|$ for $D = \Z$; when $I = D$ we suppress the superscript $I$.
The {\em total variation distance} between densities $f, g: D \to \R_+$ is defined as
$\dtv\left(f, g \right) \eqdef (1/2) \cdot \| f -g  \|_1$.

Our algorithmic and structural results make essential use of continuous
piecewise exponential functions, that we now define:

\begin{definition} \label{def:pe}
Let $I = [\alpha, \beta] \subseteq D$, where $\alpha , \beta \in D$.
A function $g: I \to \R_+$ is {\em continuous $k$-piecewise exponential} if there exist
$\alpha \equiv x_1 < x_2 <  \ldots < x_{k} < x_{k+1} \equiv \beta$, $x_i \in D$,
such that for all $i \in [k]$ and $x \in I_i \eqdef [x_i, x_{i+1}]$ we have that $g(x) = g_i(x)$,
where  $g_i(x) \eqdef \exp(c_i x + d_i)$, $c_i, d_i \in \R$.
\end{definition}

Note that the above definition implies that $g_i(x_{i+1}) = g_{i+1}(x_{i+1})$, for all $i \in [k]$.

We will also require a number of useful properties of log-concave densities,
summarized in the following lemma:
\begin{lemma} \label{lem:lc-facts}
Let $f: D \to \R_+$ be a log-concave density
with mean $\mu$ and standard deviation $\sigma$.
Then: (i) If $D=\R$ or $D=\Z$ \new{and $\sigma$ is at least a sufficiently large constant,}
$1/(8\sigma) \leq M_f \eqdef \max_{x \in D} f(x) \leq 1/\sigma$, and
(ii) \new{$f(x) \leq \exp\left(1-|x-\mu| M_f/e\right)M_f$} for all $x \in D$.
\end{lemma}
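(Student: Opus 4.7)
The key step is (ii); part (i) largely follows from it. Let $m$ denote the mode of $f$, so $f(m)=M_f$.

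For (ii), I would proceed in three stages. \emph{First}, I establish an exponential-decay bound from the mode of the form $f(x)\le eM_f\exp(-cM_f|x-m|)$ for a positive constant $c$. The idea is that for $x>m$, concavity of $\log f$ forces its linear interpolation between $(m,\log M_f)$ and $(x,\log f(x))$ to lie below $\log f$ on $[m,x]$; exponentiating and integrating yields $\int_m^x f(y)\,dy\ge M_f(x-m)(1-e^{-t})/t$, where $t:=\log(M_f/f(x))\ge 0$. Combining with $\int f\le 1$ and optimizing over $t$ (splitting into $t\ge 1$ and $t<1$) gives the bound; the case $x<m$ is symmetric.

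\emph{Second}, integrating the resulting exponential envelope gives $|\mu-m|\le\int|y-m|f(y)\,dy=O(1/M_f)$, so the mean and the mode are separated by only $O(1/M_f)$. (An alternative is to apply Fradelizi's inequality $f(m)\le ef(\mu)$ combined with Step~1 evaluated at $x=\mu$, which yields the same conclusion.)

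\emph{Third}, I transfer the bound from the mode to the mean. For $|x-\mu|$ much larger than $|\mu-m|$, the triangle inequality gives $|x-m|=\Omega(|x-\mu|)$, and Step~1 then yields exponential decay of $f(x)$ in $|x-\mu|$; the rate can be weakened to $M_f/e$ and the additive constants absorbed into the prefactor $e^1$ in the statement. For $|x-\mu|\le e/M_f$, the right-hand side of (ii) is at least $M_f$, and the trivial bound $f\le M_f$ suffices.

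For (i), the lower bound $M_f\ge 1/(8\sigma)$ is a direct application of Chebyshev: $\Pr[|X-\mu|\le 2\sigma]\ge 3/4$, and since $f\le M_f$ this probability is at most $4\sigma M_f$ in the continuous case and at most $(4\sigma+1)M_f$ in the discrete case. The additive discrete correction is negligible precisely when $\sigma$ is at least a sufficiently large constant, as assumed in the hypothesis; rearranging gives $M_f\ge 3/(16\sigma)>1/(8\sigma)$. The upper bound $M_f\le 1/\sigma$ follows from (ii) by direct integration: $\sigma^2\le\int(x-\mu)^2\cdot eM_f\exp(-M_f|x-\mu|/e)\,dx=O(1/M_f^2)$, which gives $M_f\sigma=O(1)$. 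The sharp constant $1$ here is the classical log-concave inequality (tight for the one-sided exponential, for which $M_f\sigma=1$).

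The main obstacle I expect is tracking constants precisely through the three stages of (ii): the qualitative exponential tail is immediate from log-concavity, but matching the specific constants $e$ in the prefactor and $M_f/e$ in the decay rate requires a careful optimization, especially in the mode-to-mean transfer, where a naive bookkeeping would produce a prefactor of $e^{O(1)}$ rather than $e$. The discrete-case hypothesis on $\sigma$ in (i) is exactly what is needed to absorb the $O(1)$ grid correction in the Chebyshev step.
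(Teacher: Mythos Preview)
Your proposal is sound but organized quite differently from the paper's very terse treatment. The paper establishes (i) first, by citing Lemma~5.5 of \cite{LV07} for the continuous case and asserting the discrete case is similar; for (ii) it gives a two-sentence argument centered on $\mu$ from the outset: since $f$ is unimodal (and, implicitly, $f(\mu)\ge M_f/e$ by Fradelizi), the points $\mu\pm e/M_f$ lie outside the superlevel set $\{f\ge M_f/e\}$, whence $f(\mu\pm e/M_f)\le M_f/e$, and log-concavity then extrapolates to the full envelope. By contrast you prove (ii) first, via a three-stage detour through the \emph{mode}, and only afterwards deduce (i) from (ii) plus Chebyshev. Your route is longer but fully self-contained---it avoids the external citation and makes the mechanism (the total-mass constraint forcing exponential decay) explicit; the paper's route is shorter because it never leaves $\mu$. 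Your caveat that the specific constants $e$ and $M_f/e$ require careful bookkeeping is well placed: the paper's sketch does not track them either, and only the qualitative exponential envelope is used downstream.
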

For the case of continuous log-concave densities, (i) appears as Lemma~5.5 in~\cite{LV07}, 
and the discrete case follows similarly.
To show (ii) \new{we note that, since $f$ is unimodal, $f(\mu+e/M_f)$ and $f(\mu-e/M_f)$ 
are each at most $M_f/e$. The claim then follows from log-concavity.}

\begin{lemma} \label{lem:robust-moments}
Let $f: D \to \R_+$ be a log-concave density with mean $\mu$ and standard deviation $\sigma$. 
\new{Assume that either $D=\R$ or that $\sigma$ is sufficiently large.}
Let $g: D \to \R_+$ be a density with $\dtv(f, g) \leq 1/10$.
Given an explicit description of $g$, we can efficiently compute values $\tilde \mu$ and $\tilde \sigma$ so that
$|\mu-\tilde\mu| \leq 2\sigma$ and $3\sigma/10 \leq \tilde \sigma \leq 6\sigma.$
\end{lemma}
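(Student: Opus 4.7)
The plan is to estimate $\mu$ and $\sigma$ from $g$ via robust order statistics rather than raw moments, since a $1/10$-mass perturbation in total variation can move the moments arbitrarily but moves quantiles in a controlled way. Concretely, I will take $\tilde\mu$ to be a median of $g$ and $\tilde\sigma := q_{3/4}(g) - q_{1/4}(g)$ to be the interquartile range of $g$, where $q_p(g)$ denotes any $p$-quantile of $g$. Both quantities are computable in polynomial time from an explicit description of $g$: for instance, when $g$ is piecewise linear with $k$ pieces, its CDF is piecewise quadratic and can be inverted piece-by-piece in $O(k)$ time.

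The bridge from quantiles of $g$ to quantiles of $f$ is the elementary bound
\[
|F(x) - G(x)| \;=\; \left|\int_{(-\infty,x]}(f-g)\right| \;\le\; \dtv(f,g) \;\le\; 1/10 \qquad \text{for all } x\in D.
\]
This gives $F(\tilde\mu)\in[2/5,3/5]$, $F(q_{3/4}(g))\in[13/20,17/20]$, and $F(q_{1/4}(g))\in[3/20,7/20]$. Writing each quantile as $\mu + t\sigma$ and applying Chebyshev's inequality $\p(|X-\mu|\ge|t|\sigma)\le 1/t^2$ handles the concentration direction: the first containment forces $t^2 \le 5/2$, so $|\tilde\mu-\mu|\le\sqrt{5/2}\,\sigma<2\sigma$, while the other two force $|q_{3/4}(g)-\mu|,\,|q_{1/4}(g)-\mu|\le\sqrt{20/3}\,\sigma$ and hence $\tilde\sigma\le 2\sqrt{20/3}\,\sigma<6\sigma$.

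The remaining lower bound $\tilde\sigma\ge 3\sigma/10$ is exactly where the anti-concentration encoded in Lemma~\ref{lem:lc-facts}(i) enters: under the hypothesis on $D$ and $\sigma$, the density $f$ nowhere exceeds $M_f\le 1/\sigma$. Since $F(q_{3/4}(g))-F(q_{1/4}(g))\ge 13/20-7/20 = 3/10$, integrating this mass against the uniform upper bound on $f$ yields $3/10\le\tilde\sigma\cdot(1/\sigma)$, i.e., $\tilde\sigma\ge 3\sigma/10$ in the continuous case; the discrete case is identical modulo a harmless $-1$ in the count of integers in $[q_{1/4}(g),q_{3/4}(g)]$, which is absorbed by $\sigma$ being sufficiently large (or, equivalently, by inflating $\tilde\sigma$ to $q_{3/4}(g)-q_{1/4}(g)+1$ in the discrete case, which is still strictly less than $6\sigma$). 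The only delicate point in the argument is choosing the quantile levels $(1/4,1/2,3/4)$ so that the three numerical targets $2\sigma$, $6\sigma$, and $3\sigma/10$ are met simultaneously; no structural fact beyond Chebyshev's inequality and the mode bound of Lemma~\ref{lem:lc-facts} is required.
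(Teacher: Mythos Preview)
Your proposal is correct and follows essentially the same approach as the paper: define $\tilde\mu$ as the median of $g$ and $\tilde\sigma$ as the interquartile range, pass to quantiles of $f$ via the Kolmogorov-distance bound $|F-G|\le \dtv(f,g)\le 1/10$, apply a second-moment tail inequality for the upper bounds, and use the mode bound $M_f\le 1/\sigma$ from Lemma~\ref{lem:lc-facts}(i) for the lower bound on $\tilde\sigma$. The only cosmetic difference is that the paper invokes Cantelli's (one-sided Chebyshev) inequality with $t=2$ and $t=3$ to hit the constants $2\sigma$ and $6\sigma$ exactly, whereas you use the two-sided Chebyshev inequality and obtain the slightly sharper $\sqrt{5/2}\,\sigma$ and $2\sqrt{20/3}\,\sigma$; either choice suffices.
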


The proof of the above lemma uses the log-concavity of $f$ and is deferred to Appendix \ref{sec:app-lemma2}.

\section{Proof of Theorem~\ref{thm:main-final}: Our Algorithm and its Analysis} \label{sec:results}

\subsection{Approximating Log-concave Densities by Piecewise Exponentials} \label{ssec:structural}
Our algorithmic approach relies on approximating log-concave densities by continuous piecewise exponential functions.
Our first structural lemma states that we can approximate a log-concave density
by a continuous piecewise exponential pseudo-distribution with an appropriately small set of interval pieces.
\begin{lemma} \label{lem:struct}
Let $f: D \to \R_+$ be a log-concave density with mean $\mu$, standard deviation $\sigma$ 
\new{at least a sufficiently large constant}, and
$\epsilon>0$. Let $I = [\alpha, \beta] \subseteq D$ be such that
$\alpha<\mu<\beta$ and $|\alpha-\mu|, |\beta  -\mu|  = \Theta(\log(1/\epsilon)\sigma)$,
with the implied constant sufficiently large.
Let $k$ be an integer so that either $k=\new{\Theta}(\log(1/\eps)/\eps)$, 
or $k=\beta-\alpha = \new{O(\log(1/\eps)/\eps)}$ and $D=\Z$.
Consider the set of equally spaced endpoints
$\alpha \equiv x_1 < x_2 <  \ldots < x_{k} < x_{k+1} \equiv \beta$.
There exist indices $1 \leq l < r \leq k+1$ and a \new{log-concave}
continuous piecewise exponential pseudo-distribution
$g:  I \to \R_+$ with $\|f - g\|_1 \leq O(\eps)$ such that the following are satisfied:
\begin{itemize}
\item[(i)] $g(x) = 0$, for all $x \not \in J \eqdef [x_l, x_r]$.

\item[(ii)] For all $l\leq i \leq r$ it holds $g(x_i) = f(x_i)$.

\item[(iii)] For $l\leq i < r$, $g$ is exponential on $[x_i,x_{i+1}]$.
\end{itemize}
\end{lemma}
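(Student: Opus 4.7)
The natural definition is to take the piecewise exponential interpolant of $f$ on the grid: for $l \le i < r$, let $\psi_i$ be the linear interpolant of $\phi := \log f$ on $[x_i, x_{i+1}]$ and set $g(x) := \exp(\psi_i(x))$; outside $J := [x_l, x_r]$ set $g \equiv 0$. Conditions (i)--(iii) are then built into the construction. Log-concavity on $J$ holds because the chord slopes $\psi_i' = (\phi(x_{i+1}) - \phi(x_i))/w$ of the concave function $\phi$ are non-increasing in $i$, so the continuous piecewise linear function $\psi$ is itself concave and $g = \exp(\psi)$ is log-concave. I would choose $x_l$ as the smallest grid point with $f((-\infty, x_l]) \ge \eps$ and symmetrically $x_r$ the largest grid point with $f([x_r, \infty)) \ge \eps$ (with the analogous discrete cumulants when $D = \Z$). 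Since each grid step contributes at most $w M_f = \Theta(\eps \sigma) \cdot O(1/\sigma) = O(\eps)$ to the cumulative distribution (using $M_f \le 1/\sigma$ from Lemma~\ref{lem:lc-facts}(i)), we get $f(I \setminus J) = O(\eps)$. Together with Lemma~\ref{lem:lc-facts}(ii), which (once the implied constant in $|\alpha-\mu|,|\beta-\mu| = \Theta(\log(1/\eps)\sigma)$ is taken large enough) gives $f(D \setminus I) = O(\eps)$, it remains to bound the interior error $\|f - g\|_1^J$.

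\textbf{Interior error.} On each piece, $\phi \ge \psi_i$ by concavity, so $f \ge g$; writing $\delta_i := \max_{[x_i, x_{i+1}]}(\phi - \psi_i)$,
\[
\eta_i \;:=\; \int_{x_i}^{x_{i+1}}(f - g) \;=\; \int_{x_i}^{x_{i+1}} g\bigl(e^{\phi - \psi_i} - 1\bigr) \;\le\; (e^{\delta_i} - 1)\int_{x_i}^{x_{i+1}} g.
\]
The maximum is attained at the point $z_i$ where $\phi'(z_i) = \psi_i'$, giving $\delta_i = \int_{x_i}^{z_i}(\phi'(t) - \psi_i')\,dt \le w V_i$ for $V_i := \phi'(x_i^+) - \phi'(x_{i+1}^-)$. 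Telescoping, $\sum_i V_i = V_{\mathrm{tot}} := \phi'(x_l^+) - \phi'(x_r^-)$. The standard tangent-line argument (integrating $f(x) \le f(x_l)\,e^{\phi'(x_l^+)(x - x_l)}$ over $(-\infty, x_l]$) yields $\phi'(x_l^+) \le f(x_l)/f((-\infty, x_l]) \le (1/\sigma)/\eps$ when positive, and symmetrically at $x_r$, so $V_{\mathrm{tot}} \le 2/(\sigma \eps)$. In particular $\delta_i \le w V_{\mathrm{tot}} = O(1)$, so $e^{\delta_i} - 1 = O(\delta_i)$ uniformly. Using $\int_{x_i}^{x_{i+1}} g \le w M_f \le w/\sigma$, summing over pieces, and substituting $w = \Theta(\eps\sigma)$:
\[
\sum_i \eta_i \;\le\; O(w/\sigma) \sum_i \delta_i \;\le\; O(w^2/\sigma)\, V_{\mathrm{tot}} \;=\; O(\eps).
\]

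\textbf{Discrete case and main obstacle.} The discrete argument is identical, with integrals replaced by sums and $\phi'$ replaced by the forward difference $\Delta\phi(x) := \phi(x+1) - \phi(x)$ (non-increasing by discrete log-concavity); in the regime $k = \beta - \alpha$, each piece has width $w = 1$, so $g$ simply agrees with $f$ at every integer of $J$ and the interior error vanishes outright. The subtlest ingredient is the bound $V_{\mathrm{tot}} \le O(1/(\sigma\eps))$: it requires the truncation points $x_l, x_r$ to leave just enough tail mass outside $J$ for the tangent-line estimate to fire, while simultaneously keeping the discarded mass only $O(\eps)$. This tension is exactly what forces $w = \Theta(\eps\sigma)$, equivalently $k = \Theta(\log(1/\eps)/\eps)$, matching the statement of the lemma.
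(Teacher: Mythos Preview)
Your argument is correct, but it takes a genuinely different and more elaborate route than the paper's.

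The paper makes the same piecewise-exponential interpolant construction, but for the interior error it uses a much more elementary observation: split $J$ at an interval $I_j$ containing the mode. On each $I_i$ to the left of the mode, both $f$ and $g$ are monotone non-decreasing and agree at the endpoints, so $|f(x)-g(x)|\le f(x_{i+1})-f(x_i)$ pointwise, giving $\|f-g\|_1^{I_i}\le L\,(f(x_{i+1})-f(x_i))$. Telescoping yields $L\cdot M_f=O(\eps\sigma)\cdot O(1/\sigma)=O(\eps)$, and symmetrically on the right. The modal interval contributes $O(\eps)$ trivially. This uses only unimodality of $f$ and the value bound $M_f\le 1/\sigma$; no derivative information is needed, and the paper can simply take $l,r$ to be the extremal grid points where $f>0$.

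Your approach instead bounds the log-gap $\delta_i$ by $wV_i$ with $V_i$ the drop of $\phi'$ over $I_i$, then controls $V_{\mathrm{tot}}=\sum_i V_i$ by a tangent-line estimate. This forces your particular choice of $l,r$ (cumulative mass $\ge\eps$), since with the paper's choice $\phi'(x_l^+)$ could be arbitrarily large. Your route is the standard ``second-order'' interpolation-error argument and would extend more naturally to higher-order piecewise approximants, but here it is heavier machinery than needed: the paper's telescoping on the \emph{values} $f(x_i)$ replaces your telescoping on the \emph{derivatives} $\phi'(x_i)$ and sidesteps the $V_{\mathrm{tot}}$ bound entirely.

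One small remark: your claim $\sum_i V_i = V_{\mathrm{tot}}$ should be $\le$ (you miss possible jumps of $\phi'$ at grid points), but this only helps.
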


\begin{proof}
\new{If $D=\Z$ and $\sigma$ is less than a sufficiently small constant, 
then Lemma \ref{lem:lc-facts} implies that all but an $\epsilon$-fraction of the mass of $f$ 
is supported on an interval of length $O(\log(1/\epsilon))$. 
If we let $I$ be this interval and take $k=|I|$, 
we can ensure that $g=f$ on $I$ and our result follows trivially. 
Henceforth, we will assume that either $D=\R$ or that $\sigma$ is sufficiently large.}

The following tail bound is a consequence of Lemma ~\ref{lem:lc-facts}
and is proved in Appendix \ref{sec:app-claim1}:
\begin{claim} \label{clm:tail-bound}
Let $f: D \to \R_+$ be a log-concave density with mean $\mu$ and standard deviation $\sigma$.
Let $\alpha \leq \mu - \Omega(\sigma(1+ \log(1/\eps)))$ and
$\beta \geq \mu + \Omega(\sigma(1+ \log(1/\eps))))$.
Then, $\|f\|_1^{(-\infty,\alpha)} \leq \eps$ and $\|f\|_1^{(\beta,\infty)} \leq \eps.$
\end{claim}

By Claim~\ref{clm:tail-bound}, it suffices to exhibit the existence of the function $g: I \to \R_+$
and show that $\|f-g\|_1^{I} \leq O(\eps).$
We note that if $D=\Z$ and $k=\beta-\alpha$,
then we may take $g=f$ on $I$, and this follows immediately.
\new{Hence, it suffices to assume that $k>C(\log(1/\eps)/\eps)$, 
for some appropriately large constant $C>0$.}

Next, we determine appropriate values of $l$ and $r$.
In particular, we let $l$ be the minimum value of $i$ so that $f(x_i)>0$,
and let $r$ be the maximum. We note that the probability measure
of $\mathrm{supp}(f) \setminus J$ is at most
$O(|\beta-\alpha|/k)=O(\epsilon \sigma)$.
Since \new{$M_f=O(1/\sigma)$, by Lemma~\ref{lem:lc-facts}(i), } 
we have that $f(I \setminus J)=O(\epsilon)$.
Therefore, it suffices to show that $\|f-g\|^J_1 = O(\eps)$.

We take $k = \Omega(\log(1/\eps)/\eps)$.
Since the endpoints $x_1, \ldots, x_{k+1}$ are equally spaced,
it follows that $L =  |x_{i+1} - x_i| = O(\eps \sigma)$.

For $l\leq i < r$, for $x\in [x_i,x_{i+1}]$ we let $g(x)$ be given
by the unique exponential function that interpolates $f(x_i)$ and $f(x_{i+1})$.
We note that this $g$ clearly satisfies properties (i), (ii), and (iii).
It remains to show that $\|f-g\|_1^J=O(\eps).$

Let $I_j = [x_j, x_{j+1}]$ be an interval containing a mode of $f$.
We claim that $\|f - g\|^{I_j}_1 = O(\eps)$.
This is deduced from the fact that $\max_x g(x) \leq \max_x f(x) \leq 1/\sigma$,
where the first inequality is by the definition of $g$ and the second follows by Lemma~\ref{lem:lc-facts} (i).
This in turn implies that the probability mass of both $f(I_j)$ and $g(I_j)$
is at most $L \cdot (1/\sigma) = O(\eps)$.

We now bound from above the contribution to the error
coming from the intervals $I_1, \ldots, I_{j-1}$, 
i.e., the quantity $\sum_{i=1}^{j-1} \|f-g\|_1^{I_i}$.
Since all $I_i$'s have length $L$, and $f, g$ are monotone non-decreasing agreeing on the endpoints,
we have that the aforementioned error term is at most
$$L \cdot \sum_{i=1}^{j-1} \left( f(x_{i+1})-f(x_{i}) \right) = O(\epsilon \sigma) \cdot M_f
\leq O(\epsilon \sigma) \cdot (1/\sigma) = O(\epsilon) \;.$$
A symmetric argument shows the error coming from the
intervals $I_{j+1}, \ldots, I_{k}$ is also $O(\eps)$.
An application of the triangle inequality completes the proof.
\end{proof}

The following proposition establishes the fact that the log-concave 
piecewise exponential approximation can be assumed to be appropriately discretized:

\begin{proposition} \label{prop:this-form-close}
Let $f: D \to \R_+$ be a log-concave density with mean $\mu$, standard deviation $\sigma$
\new{at least a sufficiently large constant}, and $\epsilon>0$.
Let $\tilde \sigma = \Theta(\sigma)$.
Let $I = [\alpha, \beta] \subseteq D$ containing $\mu$ be such that $|\alpha-\mu|, |\beta-\mu|  = \Theta (\log(1/\epsilon) \tilde \sigma)$,
where the implied constant is sufficiently large.
Consider a set of equally spaced endpoints $\alpha \equiv x_1 < x_2 <  \ldots < x_{k} < x_{k+1} \equiv \beta$, $x_i \in D$,
where either $k = \new{\Theta}(\log(1/\eps)/\eps)$, or $D=\Z$ and $k=\beta-\alpha \new{= O(\log(1/\eps)/\eps)}$.
There exist indices $1 \leq l < r \leq k+1$ and a \new{log-concave} continuous piecewise exponential pseudo-distribution
$h:  I \to \R_+$ with $\|f - h\|_1 \leq O(\eps)$ such that the following are satisfied:
\begin{itemize}

\item[(i)] $h(x) > 0$ if and only if $x \in J \eqdef [x_l, x_r]$.

\item[(ii)] For each endpoint $x_i$, $i \in [l, r]$,
we have that (a) $\log(h(x_i)\tilde \sigma)$ is an integer multiple of $\eps/k$,
and (b) $\new{-O(\log(1/\eps)) \leq \log(h(x_i)\tilde \sigma) \leq O(1) }$.

\item[(iii)] For any $i \in [l, r-1]$ we have
$|\log\left(h(x_i)\tilde \sigma\right)-\log\left(h(x_{i+1})\tilde \sigma\right)|$ is of the form $b \cdot \eps \cdot  2^c/k$,
for integers $|b| \leq (1/\eps) \log(1/\epsilon)$ and $0 \leq c \leq O(\log (1/\eps))$.
\end{itemize}
\end{proposition}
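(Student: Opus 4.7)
My plan is to start from the piecewise exponential pseudo-distribution $g$ supplied by Lemma~\ref{lem:struct} (applied with $\tilde\sigma$ replacing $\sigma$; since $\tilde\sigma=\Theta(\sigma)$, the same breakpoints work up to constants) and to discretize its log-values at the breakpoints, absorbing only $O(\eps)$ extra $L_1$ error. Let $u_i=\log(g(x_i)\tilde\sigma)$ for $i\in[l,r]$; because $g(x_i)=f(x_i)$ and $f$ is log-concave, $(u_i)$ is concave in $i$ (the discrete slopes $s_i\eqdef u_{i+1}-u_i$ are non-increasing). Lemma~\ref{lem:lc-facts}(i) gives $u_i\le O(1)$, and shrinking $[l,r]$ to the indices where $f(x_i)\ge \eps^{C}/\tilde\sigma$ for a sufficiently large constant $C$ forces $u_i\ge -O(\log(1/\eps))$, giving (ii)(b); by Lemma~\ref{lem:lc-facts}(ii) and Claim~\ref{clm:tail-bound}, the truncated tails contribute only $O(\eps)$ mass. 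Since $L\eqdef |x_{i+1}-x_i|=\Theta(\eps\tilde\sigma)$ and $|(\log f)'|=O(\log(1/\eps)/\tilde\sigma)$ on the kept region (again by Lemma~\ref{lem:lc-facts}(ii)), the slopes satisfy $|s_i|=O(\eps\log(1/\eps))$.

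For the discretization itself, I would first round $u_l$ to the nearest multiple of $\eps/k$, calling the result $\tilde u_l$. For each $i$, pick the minimal integer $c_i\ge 0$ with $(1/\eps)\log(1/\eps)\cdot \eps 2^{c_i}/k\ge |s_i|$; the bound on $|s_i|$ gives $c_i=O(\log\log(1/\eps))$, well within the range in (iii). Let $\tilde s_i$ be the value obtained by rounding $s_i$ to the nearest multiple of $\eps 2^{c_i}/k$, subject to the monotonicity constraint $\tilde s_i\le\tilde s_{i-1}$: process $i$ left to right and, whenever the nearest candidate in the $c_i$-grid violates monotonicity, step down to the next grid element. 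Finally, define $\tilde u_i=\tilde u_l+\sum_{j<i}\tilde s_j$ and let $h$ be the unique continuous piecewise exponential with $h(x_i)=e^{\tilde u_i}/\tilde\sigma$ on $[x_l,x_r]$ and $h\equiv 0$ outside. Because each $\tilde s_i$ is a multiple of $\eps/k$ and $\tilde u_l$ is as well, every $\tilde u_i$ is a multiple of $\eps/k$, giving (ii)(a); the monotonicity of the $\tilde s_i$ makes $h$ log-concave; and the form of $\tilde s_i$ directly yields (iii).

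To bound the resulting error, I would show that the cumulative slope error satisfies $|\tilde u_i - u_i|=O(\eps)$ for all $i$. The per-slope rounding error is $O(\max(|s_i|\eps/\log(1/\eps),\,\eps/k))$: summing uses that $\sum_i|s_i|=O(\log(1/\eps))$ (total variation of the concave sequence $(u_i)$, bounded by twice its range) and that there are $O(k)$ baseline terms of size $\eps/k$, giving total $O(\eps)$; monotonicity corrections contribute at most one extra grid step per index and are controlled by the same sum. Pointwise, this gives $h(x_i)=g(x_i)(1+O(\eps))$, and integrating exponential pieces between breakpoints yields $\|g-h\|_1=O(\eps)\cdot\|g\|_1=O(\eps)$, so by Lemma~\ref{lem:struct} and the triangle inequality $\|f-h\|_1=O(\eps)$. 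The main obstacle is carefully maintaining monotonicity of the $\tilde s_i$ under the adaptive dyadic rounding when the grid spacing $\eps 2^{c_i}/k$ varies with $i$; the concavity of $(u_i)$ forces the slopes to change smoothly, so the greedy monotonicity corrections add at most one grid step at each index and stay within the error budget.
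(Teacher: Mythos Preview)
Your overall strategy matches the paper's: start from the piecewise exponential $g$ of Lemma~\ref{lem:struct}, truncate to the breakpoints where $g(x_i)\ge \eps^{O(1)}/\tilde\sigma$, and discretize the log-values so that each $a_i$ lies on the $\eps/k$ grid and the consecutive differences lie on the dyadic grids of (iii), while keeping $|a_i-a'_i|=O(\eps)$ and hence $\|h-g\|_1=O(\eps)$.

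There is, however, a concrete error. The claim that $|(\log f)'|=O(\log(1/\eps)/\tilde\sigma)$ on the kept region does not follow from Lemma~\ref{lem:lc-facts}(ii) and is false: a log-concave density can have arbitrarily steep log-derivative. For instance, take $f$ equal to $\tfrac12$ on $[-1,0]$ and $\tfrac12 e^{-Nx}$ on $[0,\infty)$ (normalized), with $N$ huge; then $\sigma=\Theta(1)$ but $|(\log f)'|=N$ on $(0,\infty)$. Consequently your bound $|s_i|=O(\eps\log(1/\eps))$ and the conclusion $c_i=O(\log\log(1/\eps))$ are both wrong: in the same example one slope satisfies $|s_i|=\Theta(\log(1/\eps))$, forcing $c_i=\Theta(\log(1/\eps))$. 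What is true (and all the proposition needs for (iii)) is $|s_i|\le O(\log(1/\eps))$, which follows simply from the range bound on $u_i$, and this still makes your error sum $\sum_i \max(|s_i|\eps/\log(1/\eps),\eps/k)=O(\eps)$ go through.

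The second difference is how concavity is preserved. The paper does not use a left-to-right greedy repair. Instead it defines a \emph{single} rounding map $R:d'\mapsto d$ (choose the minimal $c$ with $|d'|\le 2^c\log(1/\eps)/k$, then round to the nearest multiple of $2^c\eps/k$, ties toward $0$); this map is monotone in $d'$, so applying it independently to the non-increasing sequence $(d'_i)$ yields a non-increasing $(d_i)$ with no cascading and per-term error $\le 2^{c_i}\eps/(2k)$. The cumulative error is then bounded directly by $\sum_i 2^{c_i}\eps/k=O(\eps)$ via $\sum_i 2^{c_i}=O(k)$, and the anchor is taken at the mode $j$. Your greedy scheme may also work, but the sentence ``monotonicity corrections contribute at most one extra grid step per index'' is not obviously true: if $\tilde s_{i-1}$ has already been pushed well below $\hat s_i$ by earlier corrections, the correction at step $i$ can exceed one $g_i$-step. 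A correct argument would again have to route through the total-variation bound $\sum_i|s_i|=O(\log(1/\eps))$ rather than a per-index statement; the paper's monotone-rounding trick sidesteps this entirely.
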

\begin{proof}
Let $g: I \to \R_+$ be the pseudo-distribution given by the Lemma~\ref{lem:struct}.
We will construct our function $h: I \to \R_+$ such that $\|h-g\|_1  = O(\eps)$.
For notational convenience, for the rest of this proof we will denote
$a_i \eqdef \log \left(h(x_i)\tilde \sigma \right)$ and  $a'_i \eqdef \log\left(g(x_i)\tilde \sigma\right)$ for $i \in [k+1]$.

We define the function $h$ to be supported on the interval $J = [x_l, x_r]$
specified as follows: The point $x_l$ is the leftmost endpoint such that $g(x_l) \ge \eps^2 \new{/ \tilde \sigma}$
or equivalently  $a'_l  \ge -2\log(1/\eps)$. Similarly, the point $x_r$ is the rightmost endpoint
such that $g(x_r) \ge \eps^2 \new{/ \tilde \sigma}$ or equivalently  $a'_r  \ge -2\log(1/\eps)$.

We start by showing that the probability mass of $g$ outside the interval $J$ is $O(\eps)$.
This is because $g(x)\leq \eps^2/\tilde \sigma$ off of $J$, and so has total mass at most $\eps^2/\tilde \sigma (\beta-\alpha)= O(\eps)$.

To complete the proof, we need to appropriately define $h$
so that it satisfies conditions (ii) and (iii) of the proposition statement,
and in addition that $\|h-g\|_1^{J} \leq O(\eps)$.

We note that $-O(\log(1/\eps)) \leq a'_i \leq O(1)$ for all $i \in [l,r]$.
Indeed, since $a'_l, a'_r  \ge -2\log(1/\eps)$ and $g$ is log-concave,
we have that $a'_i \ge -\log(1/\eps)$ for all $i \in [l,r]$.
Also, since $a'_i = \log\left(g(x_i) \tilde \sigma\right) =\log\left(f(x_i) \tilde \sigma\right)$,
we obtain $a'_i \leq \log (M_f \tilde \sigma) \leq \log 6 \leq 1.8$.
We will construct $h$ so that $|a_i'-a_i|=O(\eps)$ for all $l\leq i \leq r$.
We claim that this is sufficient since it would imply that $\log(g(x)/h(x))=O(\eps)$ for all $x\in J$.
This in turn implies that $h(x)=g(x)+O(\eps g(x))$, and thus that $\|h-g\|_1^J = \int_J O(\eps g(x)) dx = O(\eps)$.

We are now ready to define $a_i$, $i \in [l, r]$.
Let $j$ be such that $x_j$ is a mode of $g$.
Let $d_i$ be obtained by rounding $d'_i \eqdef a'_i-a'_{i-1}$ as follows:
Let $c_i$ be the least non-negative integer such that
$|d'_i|  \leq 2^{c_i} \new{\log(1/\eps)/k}$.
Then, we define $d_i$ to be $d'_i$ rounded
to the nearest integer multiple of $2^{c_i}\eps/k$
(rounding towards $0$ in the case of ties).
Let $a_j$ be the nearest multiple of $(\eps/k)$ to $a'_j$.
Let $a_i = a_j + \sum_{k=j+1}^i d_k$ for $j > i$,
and $a_i= a_j- \sum_{k=i+1}^j d_k$ for $i < j$.
We define $h$ to be the continuous piecewise exponential
function with $h(x_i)=\exp(a_i)/\tilde \sigma$, $i \in [l, r]$,
that is exponential on each of the intervals $[x_i, x_i+1]$,
for $i \in [l, r-1]$.

By construction, for all $i \in [l, r]$,
$a_i$ is an integer multiple of $\eps/k$
and $|a_i-a_{i+1}|$ is of the form $b \cdot \eps \cdot 2^c/k$ for integers
$0 \leq b \leq (1/\eps)\log(1/\eps)$, and $0 \leq c \leq O(\log 1/\eps)$.
Since $g$ is log-concave, we have $a'_i-a'_{i-1} \geq a'_{i+1}-a'_i$.
Note that the rounding of the $d_i$'s is given by a monotone function
and thus we also have $a_i-a_{i-1} \geq a_{i+1}-a_i$. Hence,
$h$ is also log-concave.
Since $|a'_i| \leq \log(1/\eps)$, $i \in [l, r]$, the definition of the $c_i$'s yields
$\sum_{i} 2^{c_i} \new{\log(1/\eps)/k}  =O( \log(1/\eps))$
or $\sum_{i} 2^{c_i} =O(k)$.
Since $|d'_i-d_i| \leq 2^{c_i} \eps/k$,
we have that $|a_i-a'_i| \leq (\eps/k) + \sum_i 2^{c_i}\eps/k \leq  O(\eps)$. This completes the proof.
\end{proof}

\subsection{Main Algorithm} \label{ssec:DP}

\begin{theorem} \label{thm:main}
Let $g: D \to \R_+$ be a density and let $\opt = \inf_{f \in \mathcal{LC}(D)} \dtv(g, f)$.
There exists an algorithm that, given $g$ and $\eps>0$, outputs
an explicit log-concave density $h$ such that $\dtv(g, h)=O(\opt+ \eps)$.
The algorithm has running time $O((t+1)\polylog(1/\eps)/\eps^4)$,
where $t$ is the average across the intervals of an upper bound on the time
needed to approximate $\|g-h\|_1^{[x_i,x_{i+1}]}$ to within $O(\eps^2/\log(1/\eps))$.
\end{theorem}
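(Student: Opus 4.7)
The plan is to cast the search for the best log-concave piecewise exponential approximation to $g$ on the discrete grid furnished by Proposition~\ref{prop:this-form-close} as a shortest-path problem on a DAG, and then obtain the $O(\opt+\epsilon)$ guarantee by applying Proposition~\ref{prop:this-form-close} to an optimal log-concave density and invoking the triangle inequality. I would begin by computing estimates $\tilde\mu,\tilde\sigma$ of the moments of the nearest log-concave density to $g$ via Lemma~\ref{lem:robust-moments}; its hypothesis $\dtv(f,g)\leq 1/10$ is satisfied unless $\opt>1/10$, in which case the bound $\dtv(g,h)=O(\opt)$ is trivial for any log-concave output (the case of very small $\sigma$ in the discrete setting is handled similarly by brute-force search over the $O(\log(1/\epsilon))$ points of near-support). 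Using $\tilde\mu,\tilde\sigma$ I fix the interval $I=[\alpha,\beta]$, the $k+1=\Theta(\log(1/\epsilon)/\epsilon)$ equally spaced breakpoints, the grid of $N_a=\widetilde{O}(1/\epsilon^2)$ admissible values $a_i=\log(h(x_i)\tilde\sigma)$, and the grid of $N_d=\widetilde{O}(1/\epsilon)$ admissible log-slope jumps $d_i=a_i-a_{i-1}$ prescribed by Proposition~\ref{prop:this-form-close}.

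The DAG has nodes $(i,a,d)$ together with a source $s$ and sink $t$, where an $s$-$t$ path corresponds to a candidate $h$ supported on some $[x_l,x_r]$ with values $e^{a_i}/\tilde\sigma$ at the breakpoints and exponential interpolation in between. Initial edges $s\to(l,a,d)$ and terminal edges $(r,a,d)\to t$ carry the masses $\|g\|_1^{[\alpha,x_l]}$ and $\|g\|_1^{[x_r,\beta]}$, accounting for the cost of taking $h\equiv 0$ outside $[x_l,x_r]$. An internal edge $(i,a,d)\to(i{+}1,a+d',d')$ is present exactly when $d'\leq d$---this is precisely the log-concavity constraint $d_{i+1}\leq d_i$---and carries weight equal to an $O(\epsilon^2/\log(1/\epsilon))$-accurate estimate of $\|g-h\|_1^{[x_i,x_{i+1}]}$ on the exponential piece determined by $(a,a+d')$. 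To avoid an $N_d$ blow-up in out-degree I would insert auxiliary nodes $(i,a,d)^{\star}$ chained monotonically across the $d$-grid, with edges $(i,a,d)\to(i,a,d)^{\star}$ and $(i,a,d)^{\star}\to(i{+}1,a+d,d)$, so each $(i,a,d)$ can reach every $d'\leq d$ while $|V|$ and $|E|$ stay at $\widetilde{O}(kN_aN_d)=\widetilde{O}(1/\epsilon^4)$. A topological-order relaxation then computes the shortest path in $O(|E|)$ time; combined with one edge-weight evaluation per distinct $(i,a,a')$, this yields the claimed $O((t+1)\polylog(1/\epsilon)/\epsilon^4)$ running time.

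For correctness, let $f^{\star}\in\mathcal{LC}(D)$ achieve $\opt$ and let $h^{\star}$ be the discretized log-concave pseudo-distribution produced by Proposition~\ref{prop:this-form-close} applied to $f^{\star}$; then $\|g-h^{\star}\|_1\leq\|g-f^{\star}\|_1+\|f^{\star}-h^{\star}\|_1\leq 2\opt+O(\epsilon)$. Since $h^{\star}$ corresponds to a valid $s$-$t$ path and the per-edge approximation slack $O(\epsilon^2/\log(1/\epsilon))$ sums to $O(\epsilon)$ over the $O(k)$ edges, the pseudo-distribution $\tilde h$ recovered from the shortest path satisfies $\|g-\tilde h\|_1\leq 2\opt+O(\epsilon)$; log-concavity of $\tilde h$ is automatic since the $d_{i+1}\leq d_i$ constraint keeps the exponent concave at every breakpoint. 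Normalizing $h=\tilde h/\int\tilde h$ adds at most $|1-\int\tilde h|=|\int g-\int\tilde h|\leq\|g-\tilde h\|_1$ to the $L_1$ error, preserving the bound $\dtv(g,h)=O(\opt+\epsilon)$. The main obstacle is the joint design of the DAG and the discretization: the $(a,d)$-grid must be coarse enough that $kN_aN_d=\widetilde{O}(1/\epsilon^4)$, fine enough for Proposition~\ref{prop:this-form-close} to guarantee an $O(\epsilon)$-close log-concave $h^{\star}$, and structured so the edge weight factors through the endpoints $(a,a+d')$ of the current piece, so that the suffix-min gadget can enforce $d_{i+1}\leq d_i$ locally without multiplying out-degree by $N_d$.
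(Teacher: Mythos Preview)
Your proposal is correct and follows essentially the same approach as the paper's proof: both compute robust moment estimates via Lemma~\ref{lem:robust-moments}, set up the $(i,a,d)$-indexed DAG over the discretized grid of Proposition~\ref{prop:this-form-close}, enforce log-concavity through the monotonicity constraint $d'\leq d$, and use a chain gadget of zero-weight edges to keep the out-degree bounded (the paper uses direct edges $(i,a,d)\to(i,a,d')$ to the predecessor $d'$ rather than your starred auxiliary nodes, but the effect is identical). One small bookkeeping point: the path weight you compute is $\|g-\tilde h\|_1^I$, not $\|g-\tilde h\|_1$, so your correctness inequality should first bound $\|g-\tilde h\|_1^I\leq 2\opt+O(\epsilon)$ and then add the mass $\|g\|_1^{D\setminus I}\leq\opt+O(\epsilon)$ separately; this does not affect the final $O(\opt+\epsilon)$ conclusion.
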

\begin{proof}
Let $f$ be a log-concave density such that  $\dtv(g, f)  = \opt$.
\new{First, we compute the median $\tilde \mu$ and interquartile range $\tilde \sigma$.
If $\opt \leq 1/10$, Lemma~\ref{lem:robust-moments} applies to these,
and otherwise Theorem~\ref{thm:main} is trivial.}
Using these approximations, we construct an interval $I \subseteq D$
containing at least $C\log(1/\epsilon)$ standard deviations about the mean of $f$,
and of total length $O(\log(1/\epsilon)\sigma)$, where $C$ is a sufficiently large constant.
If $D=\Z$ and $1/\eps=O(\sigma)$, we let $k$ be the length of $I$,
otherwise, we let $k=\Theta(\log(1/\eps)/\eps)$ and ensure that the length of $I$ divided by $k$ is in $D$.

We will attempt to find a log-concave pseudo-distribution $h$
satisfying the properties of Proposition~\ref{prop:this-form-close}
so that $\dtv(h,g)$ is (approximately) minimized.
Note that the proposition implies there exists a log-concave pseudo-distribution $h$ with $\dtv(f,h) = O(\eps)$,
and thus $\dtv(g,h)=O(\opt+ \eps)$. Given any such
$h$ with $\dtv(h,g)=O(\opt+\eps)$,  re-normalizing gives an explicit
log-concave density $h'$ with $\dtv(h',g)=O(\opt+\eps)$.

We find the best such $h$ via dynamic programming.
In particular, if $x_1,\ldots,x_{k+1}$ are the interval endpoints,
then $h$ is determined by the quantities $a_i=\log(h(x_i)\new{\tilde \sigma})$,
which are either $- m \cdot \eps/k$, where $m \in \Z_+$
$|m| \leq O((1/\eps) \log(1/\eps)k)$, or $-\infty$.
The condition that $h$ is log-concave
is equivalent to the sequence $a_i$ being concave.

\new{Let $S$ be the set of possible $a_i$ 's,
i.e., the multiples of $\epsilon/k$ in the range $[-\log(1/\eps), O(1)] \cup \{- \infty\}$.
Let $T$ be the set of possible $a_{i+1}-a_{i}$'s, i.e., numbers of the form $b \cdot \eps \cdot  2^c/k$,
for integers $|b| \leq (1/\eps) \log(1/\epsilon)$ and $0 \leq c \leq O(\log (1/\eps))$.
Let $H$ be the set of $h$ which satisfy the properties of
Proposition \ref{prop:this-form-close} except the bound on $\|h-f\|_1$.
}
We use dynamic programming to determine for each $i \in [k], a \in S, d \in T$
the concave sequence  $a_1,\ldots, a_i$ so that
$a_i=a$, $a_i-a_{i-1}=d$ and $\|h-g \|^{[x_1, x_i]}_1$
is as small as possible, where $h(x)$ is the density
obtained by interpolating the $a_i$'s by a piecewise exponential function.

We write $e_i(a_{i}, a_{i+1})$ for the error in the $i$-th interval $[x_{i}, x_{i+1}]$.
When $a_i$ and $a_{i+1}$ are both finite,
we take $e_i(a_{i},a_{i+1}) \eqdef  \|g-h_i\|_1^{[x_{i}, x_{i+1}]}$,
where $h_i(x) = \exp\left({\frac{x-x_i}{x_{i+1}-x_i}a_i + \frac{x_{i+1}-x_i}{x_{i+1}-x_i}a_{i+1}}\right)/\sigma.$
We define $e_i(a,-\infty)=e_i(-\infty,a) \eqdef \|g\|_1^{[x_{i}, x_{i+1}]}$.
Thus, we have  $\| g, h\|_1^{[x_1, x_i]} \leq \sum_{i=1}^{k} e_i(a_i,a_{i+1})$.
When $D=\R$, this an equality.
However, when $D = \Z$, we double count the error in the endpoints in the interior of the support,
and so have $\sum_{i=1}^{k} e_i(a_{i},a_{i+1}) \leq 2\| g-h\|_1^{[x_1, x_i]}.$

The algorithm computes $\tilde e_i(a_{i-1}, a_{i})$ with
$|\tilde e_i(a_{i-1}, a_{i}) - e_i(a_{i-1}, a_{i})| \leq \eps/k$ for all $a_{i-1},a_i \in S$
with $a_i-a_{i-1} \in T$.

\medskip

\fbox{\parbox{6in}{
{\bf Algorithm} {\tt Compute $h$}\\
Input: an oracle for computing $\tilde e_i(a,a')$

Output: a sequence $a_1, \ldots, a_n$ that minimizes $\sum_{i=1}^k e_{i+1}(a_i,a_{i+1})$


\begin{enumerate}
\item Let $G$ be the directed graph with vertices of the form:
\begin{enumerate}
\item $(0,-\infty,-)$,$(k+1,\infty,+)$
\item $(i,a,d)$ for $i\in[k], a\in S\backslash\{-\infty\},d\in T\cup\{\infty\}$
\item $(i,-\infty,s)$ for $i\in[k],s\in\{\pm \}$
\end{enumerate}
and weighted edges of the form
\begin{enumerate}
\item $(i,-\infty,-)$ to $(i+1,a,\infty)$ of weight $\tilde e_i(-\infty,a)$
\item $(i,a,d)$ to $(i+1,a+d,d)$ of weight $\tilde e_i(a,a+d)$
\item $(i,a,d)$ to $(i,a,d')$ with $d'$ the predecessor of $d$ in $T\cup\{\infty\}$ or weight $0$.
\item $(i,a,d)$ to $(i+1,-\infty,+)$ of weight $e_i(a,-\infty)$
\item $(i,-\infty,s)$ to $(i+1,-\infty,s)$ of weight $e_i(-\infty,-\infty)$
\end{enumerate}
\item Using the fact that $G$ is a DAG compute the path $P$ from $(0,-\infty,-)$ to $(k+1,-\infty,+)$ of smallest weight.
\item For each $i\in[k]$, let $a_i$ be the value such that $P$ passes through a vertex of the form $(i,a_i,d^{\ast})$.
\end{enumerate}
}}

\fbox{\parbox{6in}{
{\bf Algorithm} {\tt Full-Algorithm}\\
Input: A concise description of a distribution $g$ such that
$\dtv(f,g) \leq \opt$ for some log-concave distribution $f$ and $\eps > 0$.

Output: A log-concave continuous piecewise exponential $h$ with $\dtv(g,h) \leq O(\opt+\eps)$

\begin{enumerate}

\item Compute the median $\tilde \mu$ and interquartile range $\tilde \sigma$ of $g(x)$
\item If $D=\Z$ and $\tilde{\sigma}=O(1/\eps),$
\item \quad	let $\alpha=\tilde \mu-\Theta(\log(1/\eps)/\eps)$ and $\beta=\tilde \mu+\Theta(\log(1/\eps)/\eps)$ be integers,
$k=\beta-\alpha$ and $L=1,$
\item else let $L=\Theta(\eps \tilde\sigma)$ with $L \in \Z$ or $\R$ in the discrete and continuous cases respectively,
$k=\Theta(\log(1/\eps)/\eps)$ be an even integer, $\alpha=\tilde \mu - (k/2)$ and $\beta=\tilde \mu - (k/2)$.
\item Let $x_i=\alpha+(i+1)L$ for $1 \leq i \leq k+1$.
\item Let $S$ be the set of the multiples of $\epsilon/k$ in the range $[-\ln(1/\eps), O(1)] \cup \{- \infty\}$.
Let $T$ be the set of numbers of the form $b \cdot \eps \cdot  2^c/k$, for integers $|b| \leq (1/\eps) \ln(1/\epsilon)$ and $0 \leq c \leq O(\log (1/\eps))$.
\item Sort $T$ into ascending order.
\item Let $a_1, \ldots ,a_{k+1}$ be the output of algorithm {\tt Compute} $h$.
\item Return the continuous piecewise exponential $h(x)$ that has
$h(x_i)= \exp(a_i)/\tilde \sigma$ for all $1 \leq i \leq k+1$ and has endpoints
$x_l,x_{l+1},\ldots, x_r$, where $l$ and $r$ and the least and greatest $i$ such that $a_i$ is finite.

\end{enumerate}
}}

\medskip

Now we show correctness.
For every $h' \in H$, with log probabilities at the endpoints $a'_i$,
there is a path of weight $w_{h'} := \sum_{i=1}^{k} \tilde e_i(a'_i,a'_{i+1})$
which satisfies $$\|g-h'\|_1^I-\eps \leq w_{h'} \leq 2\|g-h'\|_1^I+\eps \;.$$
Thus, the output $h(x)$ has $\|g-h\|_1^I \leq 2\eps + 2\min_{h' \in H} \|g-h'\|_1^I$.
By Proposition~\ref{prop:this-form-close}, there is an $h^\ast \in H$
with $\|f-h^\ast\|_1^I \leq O(\eps)$, where $\dtv(f,g)=\opt$.
Thus, $\|g-h^\ast\|_1^I \leq \opt+O(\eps)$.
Therefore, we have that 
$$\|g-h\|_1^I \leq 2\|g-h^\ast\|_1^I + 2\eps \leq 2\opt+O(\eps) \;.$$

Since the mass of $f$ outside of $I$ is $O(\eps)$,
we have that the mass of $g$ outside of $I$ is at most $\opt+O(\eps)$.
Thus, $\dtv(g,h) \leq \opt + O(\eps) + \|g-h\|_1^I = O(\opt + \eps)$, as required.

Finally we analyze the time complexity.
The graph $G$ has $k|S||T|+2$ vertices.
Each vertex has at most one in-edge of each type.
Thus, we can find the shortest path in time $O(k|S||T|)$
plus the time it takes to compute every $\tilde e_i(a,a+d)$.
There are  $O(k|S||T|)$ such computations
and they take average time at most $t$.
Thus, the time complexity is
$$O(k|S||T|(t+1))=O(\log(1/\eps)/\eps \cdot \log(1/\eps)^2/\eps^2 \cdot \log(1/\eps)^2/\eps \cdot (t+1)) = O((t+1)\log^5(1/\eps)/\eps^4).$$
\end{proof}

\subsection{Putting Everything Together}

We are now ready to combine the various pieces that yield our main result.
Our starting point is the following non-proper learning algorithm:

\begin{theorem}[\cite{ADLS15}] \label{thm:agnostic-piecewise-linear}
There is an agnostic learning algorithm for $t$-piecewise linear distributions with sample complexity $O(t/\eps^2)$
and running time $O((t/\eps^2) \log(1/\eps))$. 
Moreover, the algorithm outputs an $O(t)$-piecewise linear hypothesis distribution.
\end{theorem}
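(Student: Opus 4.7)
The plan is to follow the VC-inequality paradigm, using the $\mathcal{A}_k$-distance as a computable proxy for total variation distance. Recall that the $\mathcal{A}_k$-distance between densities $p$ and $q$ is
$$\|p-q\|_{\mathcal{A}_k} \eqdef \sup_{S \in \mathcal{A}_k} |p(S) - q(S)|,$$
where $\mathcal{A}_k$ denotes the family of unions of at most $k$ intervals. The crucial structural observation is that if $h_1, h_2$ are both $t$-piecewise linear densities, then their difference has at most $O(t)$ sign changes, so $\|h_1 - h_2\|_1 = 2 \|h_1 - h_2\|_{\mathcal{A}_{O(t)}}$. Consequently, approximate minimization of $\|\cdot\|_{\mathcal{A}_{O(t)}}$ is equivalent, up to constants, to approximate minimization of total variation distance within the hypothesis class.

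First, I would draw $n = O(t/\eps^2)$ i.i.d.~samples from $f$ and form the empirical density $\widehat{f}_n$. Since the family of unions of $O(t)$ intervals on $\R$ has VC-dimension $O(t)$, the Vapnik--Chervonenkis inequality (in the Devroye--Lugosi form) yields
$$\|\widehat{f}_n - f\|_{\mathcal{A}_{O(t)}} = O(\sqrt{t/n}) \leq \eps$$
with probability at least $9/10$. This step needs only sample draws plus sorting, which costs $O(n \log n)$ time.

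Second, I would construct an $O(t)$-piecewise linear hypothesis $h$ approximately minimizing $\|\widehat{f}_n - h\|_{\mathcal{A}_{O(t)}}$. The key algorithmic fact is that the witness set $S^\star$ realizing the $\mathcal{A}_{O(t)}$-supremum against any piecewise linear $h$ may be taken with endpoints among the $n$ sample points, so an a priori continuous inf-sup problem reduces to a discrete one on a sorted list. I expect to phrase this as a dynamic program that sweeps the sample points in sorted order, tracks the current piece index together with a coarse discretization of the slope/intercept parameters of the active piece, and updates via min-plus style transitions in $O(\log(1/\eps))$ amortized time per sample. Exploiting monotonicity of the optimal interval placements lets one avoid enumerating all $O(n^2)$ knot pairs, yielding a total cost of $O(n \log(1/\eps)) = O((t/\eps^2)\log(1/\eps))$.

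Finally, applying the triangle inequality in the $\mathcal{A}_{O(t)}$-norm and the $\mathcal{A}_k$-vs-TV equivalence gives
$$\dtv(f, h) \leq O\left(\|\widehat{f}_n - f\|_{\mathcal{A}_{O(t)}} + \|\widehat{f}_n - h\|_{\mathcal{A}_{O(t)}}\right) \leq O(\opt) + O(\eps),$$
as required. The main obstacle is the algorithmic step above: getting the running time down to near-linear in $n$ rather than a naive polynomial requires arguing that the dynamic program need only perform a monotone sweep over discretized slope values, which in turn leans on the convexity/concavity structure of fitting a single linear piece to a contiguous block of empirical mass. Everything else (sample complexity, correctness of the $\mathcal{A}_k$-reduction) is a more or less direct invocation of the VC inequality and the sign-change count of differences of piecewise linear functions.
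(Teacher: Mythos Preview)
This theorem is not proved in the paper at all: it is stated as a black-box citation of \cite{ADLS15} and used without further argument. There is therefore no ``paper's own proof'' to compare your proposal against.

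That said, your high-level plan does match the actual strategy of \cite{ADLS15}: reduce total variation to the $\mathcal{A}_k$-distance via the sign-change count for differences of piecewise polynomials, invoke the VC inequality to control $\|\widehat f_n - f\|_{\mathcal{A}_{O(t)}}$ with $n = O(t/\eps^2)$ samples, and then approximately project $\widehat f_n$ onto $O(t)$-piecewise linear densities in the $\mathcal{A}_{O(t)}$-norm. The first and third parts of your sketch (sample complexity, correctness via triangle inequality) are essentially complete.

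The gap is the algorithmic step, and you flag it yourself. Your description --- a dynamic program over sorted samples with a ``coarse discretization of the slope/intercept parameters'' and ``monotonicity of the optimal interval placements'' --- is not yet a proof. Discretizing slopes and intercepts na\"ively would introduce a polynomial blowup in $1/\eps$ (the grid must be fine enough to resolve $L_1$ error $\eps$, and the parameter range is not a priori bounded), and you have not established the monotonicity/convexity structure you are relying on to avoid it. The actual near-linear-time algorithm in \cite{ADLS15} is an iterative merging procedure: one starts from the degenerate piecewise constant histogram on the samples and repeatedly merges adjacent intervals whose best linear fit has small $\mathcal{A}_2$-error relative to a threshold, doubling the threshold across phases. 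The per-merge cost is controlled because fitting a single linear piece in $\mathcal{A}_2$-norm reduces to a small LP, and the total number of merges telescopes. This is rather different in mechanics from the sweep-style DP you outline, and getting the running time down to $\widetilde O(n)$ is the nontrivial content of that paper.
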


To establish that our overall learning algorithm will have the optimal sample complexity of 
$O(\eps^{-5/2})$, we make use of the following approximation theorem:

\begin{theorem}\label{approxThrm}
For any log-concave density $f$ on either $\R$ or $\Z$,
and $\epsilon>0$, there exists a piecewise linear distribution
$g$  with $O(\eps^{-1/2})$ interval pieces so that $\dtv(f,g) \leq \eps$.
\end{theorem}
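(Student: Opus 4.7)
The plan is to reduce to the continuous case, normalize, and then apply adaptive piecewise linear interpolation on the effective support. First, any discrete log-concave density $f:\Z\to[0,1]$ extends to a continuous log-concave density $\tilde f:\R\to\R_+$ by linear interpolation between consecutive integers (log-concavity of $\tilde f$ follows directly from log-concavity of $f$ at integer points), and an $n$-piece piecewise linear approximation of $\tilde f$ yields an $(n+O(1))$-piece approximation of $f$ with comparable $L_1$ error. So it suffices to treat the continuous case, and by rescaling we may further assume $f$ has standard deviation $1$ and mean $0$.

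By Claim~\ref{clm:tail-bound}, $f$ places at most $\eps$ mass outside an interval $I=[-A,A]$ of length $O(\log(1/\eps))$, so it suffices to $L_1$-approximate $f|_I$ by a piecewise linear pseudo-distribution with $O(\eps^{-1/2})$ pieces and error $O(\eps)$ (a final renormalization to a probability density adds only $O(\eps)$ TV error). The classical theory of adaptive piecewise linear approximation gives that with $n$ adaptively chosen breakpoints, the best $L_1$ error on $I$ is of order $\bigl(\int_I |f''(x)|^{1/3}\,dx\bigr)^3/n^2$, achieved by equidistributing $\int_{I_j} |f''|^{1/3}$ across pieces. Thus it suffices to show that $\int_I |f''(x)|^{1/3}\,dx = O(1)$ uniformly over log-concave densities with unit standard deviation, from which $n=O(\eps^{-1/2})$ pieces suffice.

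For this key bound, write $f = e^\phi$ with $\phi$ concave. Then $f'' = (\phi''+(\phi')^2)f$, yielding the pointwise inequality $|f''|^{1/3} \leq \bigl((-\phi'')^{1/3}+|\phi'|^{2/3}\bigr)f^{1/3}$. Applying H\"older's inequality to each term reduces the problem to bounding the two integrals $\int f\cdot(-\phi'')\,dx$ and $\int f\cdot(\phi')^2\,dx$ (the latter is the Fisher information), both of which are standard $O(1)$ quantities for a log-concave density with $\sigma=1$. Combined with the exponential tail decay of $f^{1/3}$ inherited from Lemma~\ref{lem:lc-facts}, this yields the required uniform bound.

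The main obstacle is the possible non-smoothness of $f$: a log-concave density need not be $C^2$ (e.g., the Laplace density has a derivative jump at its mode), so $f''$ must be interpreted as a signed Radon measure whose singular part is not captured by the formula above. I would handle this by convolving $f$ with a smooth log-concave bump of width $\delta = o(\eps^{1/2})$ to obtain a $C^\infty$ log-concave density $\tilde f$ with $\|f-\tilde f\|_1 = O(\delta)$; since smoothing across a kink of $f'$ contributes only $O(\delta^{2/3})$ to $\int |\tilde f''|^{1/3}$ and the total variation of $f'$ as a measure is $O(1)$ by log-concavity, the bound of the previous paragraph persists for $\tilde f$. Applying the smooth-case argument to $\tilde f$ and then transferring back to $f$ via the triangle inequality produces an $O(\eps^{-1/2})$-piece piecewise linear approximation of $f$ with total $L_1$ error $O(\eps)$, completing the argument.
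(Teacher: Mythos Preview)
Your overall strategy---adaptive piecewise linear approximation governed by $\int |f''|^{1/3}$---is natural and quite different from the paper's elementary interval-decomposition argument, but two of your concrete steps break down.

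First, the discrete-to-continuous reduction fails as stated: linear interpolation of a log-concave sequence is \emph{not} log-concave in general. Take $f(0)=1$, $f(1)=4$, $f(2)=9$; then $f(1)^2=16\geq 9=f(0)f(2)$, yet the piecewise linear interpolant $\tilde f$ has $(\log\tilde f)'(1^-)=3/4<5/4=(\log\tilde f)'(1^+)$, so $\log\tilde f$ is not concave. (Interpolating $\log f$ linearly would give a log-concave piecewise-exponential $\tilde f$, but then transferring an $L_1(\R)$ piecewise-linear approximation of $\tilde f$ to an $\ell_1(\Z)$ approximation of $f$ requires a separate argument, since pointwise values on $\Z$ are not controlled by the $L_1$ norm on $\R$.)

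Second, and more seriously, your bound on $\int |f''|^{1/3}$ via H\"older and Fisher information does not go through. Integration by parts shows $\int f\,(-\phi'')=\int f'\,\phi'=\int f\,(\phi')^2$, so both ``standard $O(1)$ quantities'' you invoke are exactly the Fisher information $I(f)$---and $I(f)$ is \emph{unbounded} over log-concave densities with $\sigma=1$. Indeed, if $f_\delta$ is the uniform density on $[-\sqrt 3,\sqrt 3]$ convolved with $N(0,\delta^2)$, a short computation gives $I(f_\delta)\asymp 1/\delta\to\infty$; the same family refutes your mollification claim that ``the total variation of $f'$ as a measure is $O(1)$.'' Curiously, for this very family one checks $\int|f_\delta''|^{1/3}$ stays bounded, so your target inequality $\int|f''|^{1/3}=O(1)$ (which is scale-invariant, as it must be) may well be true---but it needs a genuinely different proof than reducing to Fisher information. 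The paper sidesteps all of this with a direct two-level decomposition (Lemma~\ref{piecewiseApproxLem} and Proposition~\ref{intervalDivisionProp}): cut the domain into intervals on which both $f$ and its log-derivative (or log finite difference) vary by at most a factor of two, show only $O(m)$ such intervals carry mass $\geq 2^{-m}$, do an elementary Taylor-type linear fit on each, and sum a geometric series. This works uniformly for $\R$ and $\Z$ without any reduction or smoothness hypothesis.
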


The proof of Theorem~\ref{approxThrm} is deferred to Appendix \ref{sec:thm4}.
We now have all the ingredients to prove our main result.

\begin{proof}[Proof of Theorem \ref{thm:main-final}]
Let $f'$ be a log-concave density with $\dtv(f,f')=\opt$.
By Theorem \ref{approxThrm}, there is a piecewise linear density $g'$
with $O(\eps^{-1/2})$ pieces that has $\dtv(f',g') \leq \opt+\eps$.
By Theorem  \ref{thm:agnostic-piecewise-linear}, there is an algorithm
with sample complexity $O(1/\eps^{5/2})$ and running time $O((1/\eps^{5/2}) \log(1/\eps))$
that computes a piecewise linear density $g'$ with $O(\eps^{-1/2})$ pieces
such that $\dtv(f',g') \leq O(\opt+\eps)$.
We apply the algorithm of Theorem \ref{thm:main} to this $g'$,
which produces a piecewise exponential approximation $h(x)$
that satisfies $\dtv(g',h) \leq O(\opt + \eps)$, and therefore
$\dtv(h, f) \leq O(\opt + \eps)$.

It remains to prove that the time complexity is $\tilde O(n^{8/5}) = \tilde O(1/\eps^4)$.
To obtain this, we must show that $t=\polylog(1/\eps)$ in the statement of Theorem \ref{thm:main}.
When $D=\Z$ and the length of each interval is $1$, we have $t=O(1)$.
Otherwise, we divide into $k=\Theta((1/\eps)\log(1/\eps))$ pieces.
Since $k \geq O(\eps^{-1/2})$, the average number of endpoints of $g'(x)$
in a piece of $h(x)$ is smaller than $1$. Thus, to get the amortized time
complexity to be $\polylog(1/\eps)$, it suffices to show this bound
for an exponential and linear function on a single interval.
The following claim is proved in Appendix~\ref{sec:claim2}:
\begin{claim} \label{claim:compute-error}
Let $g(x)=ax+b$ and $h(x)=c\exp(dx)$.
Let $I=[x',x'+L]$ be an interval with $g(x) \geq 0$
and $0 \leq h(x) \leq O(\eps/L)$ for all $x \in I$.
There is an algorithm which approximates
$\|g-h\|_1^I$ to within an additive $O(\eps^2/\log(1/\eps))$ in time $\polylog(1/\eps)$.
\end{claim}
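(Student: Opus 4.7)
The plan is to exploit the observation that $g(x)-h(x) = ax + b - c\exp(dx)$ has derivative $a - cd\exp(dx)$, which is monotone in $x$. Hence $g-h$ has at most one critical point, and therefore changes sign at most twice on $I$. Consequently I can partition $I$ into at most three sub-intervals on each of which $g-h$ has constant sign; on any such sub-interval $J$ the contribution to $\|g-h\|_1^J$ equals $\bigl|\int_J (g - h)\bigr|$ (or the analogous signed sum in the discrete case), and this quantity admits a closed form in polynomials and exponentials.

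First, I would locate the candidate critical point $x_0$ of $g-h$ analytically by solving $a = cd \exp(dx_0)$: when $d \ne 0$ and $a/(cd) > 0$, one has $x_0 = d^{-1} \log(a/(cd))$; otherwise $g-h$ is already monotone on all of $I$ and the step is skipped. Next, on each of the (at most two) maximal sub-intervals on which $g-h$ is monotone, I would locate the unique zero, if any, by bisection. The pointwise bounds on $g$ and $h$ in the outer algorithm are $O(1/\tilde\sigma)=O(\eps/L)$, so locating each zero to additive accuracy $\delta = \Theta\bigl(\eps L /\log(1/\eps)\bigr)$ contributes only $O(\delta \cdot \eps/L) = O(\eps^2/\log(1/\eps))$ to the total error. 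Starting from uncertainty $L$, this requires $O(\log(L/\delta)) = O(\log(1/\eps))$ bisection steps, each calling for a constant number of evaluations of $g$ and $h$ to $\polylog(1/\eps)$ bits.

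On each sign-constant sub-interval $[u,v]$ I would then evaluate
\[
\int_u^v (g - h)\,dx \;=\; \tfrac{a(v^2 - u^2)}{2} + b(v-u) - \tfrac{c}{d}\bigl(\exp(dv) - \exp(du)\bigr),
\]
(with the limit $d \to 0$, in which the last term becomes $c(v-u)$, handled separately), take the absolute value, and sum. The exponentials are computed to $\polylog(1/\eps)$ bits by standard means (e.g., argument reduction followed by a truncated Taylor series), which is substantially more precision than the target $O(\eps^2/\log(1/\eps))$ demands. The discrete case is structurally identical, substituting partial sums of arithmetic and geometric progressions for the integrals above.

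The main item I would need to confirm is the robustness of the bisection in degenerate configurations, in particular when $g-h$ is very small throughout a candidate sub-interval so that sign differences are blurred by finite-precision arithmetic. A simple safeguard resolves this: if both endpoints of a candidate sub-interval satisfy $|g-h| \le \eps^2/(L\log(1/\eps))$, then monotonicity of $g-h$ on that sub-interval forces $|g-h|$ to stay below this threshold throughout, so the contribution of the sub-interval to $\|g-h\|_1^I$ is at most $O(\eps^2/\log(1/\eps))$ and may safely be declared $0$.
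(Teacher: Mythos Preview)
Your approach is essentially the paper's: split at the analytic critical point $x_0=d^{-1}\log(a/(cd))$, bisect for the (at most one) zero on each monotone piece, and evaluate signed integrals (or arithmetic/geometric partial sums) in closed form. There is, however, a gap in your bisection error bound. You write that locating a zero to accuracy $\delta=\Theta(\eps L/\log(1/\eps))$ costs $O(\delta\cdot\eps/L)$ because ``the pointwise bounds on $g$ and $h$ in the outer algorithm are $O(1/\tilde\sigma)=O(\eps/L)$''. The hypothesis of the claim gives $h\le O(\eps/L)$ but only $g\ge 0$; and in the outer algorithm $g$ is the piecewise-linear hypothesis returned by the non-proper learner, which carries no pointwise $O(1/\tilde\sigma)$ bound. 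So on the length-$\delta$ uncertainty interval around the zero, $g$ (and hence $|g-h|$) is not a priori $O(\eps/L)$, and the $O(\delta\cdot\eps/L)$ estimate is unjustified.

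The paper closes this gap with one extra cut before bisecting: on a monotone piece where the endpoints of $g-h$ have opposite signs, if $g$ exceeds a fixed multiple of $\eps/L$ at an endpoint, solve the linear equation for the point where $g$ equals that multiple (chosen larger than the constant in the bound on $h$). At that point $g>h$, so the sign of $g-h$ is known and you can subdivide there; on the remaining sub-interval where a crossing can still occur one now has both $g,h\le O(\eps/L)$, and your $O(\delta\cdot\eps/L)=O(\eps^2/\log(1/\eps))$ bisection bound goes through. With this single added step (and your degenerate-case safeguard, which is fine), the argument matches the paper's.
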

This completes the proof of Theorem~\ref{thm:main-final}.
\end{proof}

\section{Discussion and Future Directions} \label{sec:concls}
In this paper, we gave the first agnostic learning algorithm for log-concave distributions
that runs in polynomial time. Our algorithm is sample-optimal and runs in time that is sub-quadratic
in the size of its input sample. The obvious open problem is to obtain an agnostic proper learning algorithm that runs
in near-linear time. More broadly, an interesting and challenging question is to generalize
our techniques to the problem of learning log-concave distributions in higher dimensions.

We believe that our algorithmic approach should naturally extend to other structured distribution families, e.g.,
to monotone hazard rate (MHR) distributions, but we have not pursued this direction. Finally, as we point out in the following
paragraph, our dynamic programming approach can be 
extended to properly learning mixtures of log-concave densities,
alas with running time exponential in the number $k$ of components, i.e., $(1/\eps)^{O(k)}$.  

Indeed, the non-proper learning algorithm from~\cite{ADLS15} also applies to mixtures, so it suffices 
to efficiently compute a nearly optimal approximation of a given distribution by a mixture of $k$ log-concave distributions.
It is easy to see that we can assume each of the mixing weights is $\Omega(\eps)$.
For our approach to work, we will need to approximate the mean and standard deviation of each distribution in the mixture.
This can be done if we have $O(1)$ samples from each component, 
which can be accomplished by taking $O(1)$ samples from our original distribution 
and noting that with probability $\Omega(\eps)^{O(1)}$ it has chosen only samples from the desired component. 
After doing this, we will need to build a larger dynamic program.
Our new dynamic program will attempt to approximate $f$ as a mixture of functions $h$ of the form 
given in Proposition~\ref{prop:this-form-close}. Specifically, it will need to have steps corresponding to each of the $x_i$'s
for each of the functions $h$, and will need to keep track of both the current value of each $h$
and its current logarithmic derivative. 

The aforementioned discussion naturally leads to our final open problem:
Is there a proper learning algorithm for mixtures of $k$ log-concave distributions 
with running time $\poly(k/\eps)$?


\bibliographystyle{alpha}
\bibliography{allrefs}

\appendix
\section{Omitted Proofs}




\subsection{Proof of Lemma \ref{lem:robust-moments}} \label{sec:app-lemma2}

\medskip

\noindent {\bf Lemma \ref{lem:robust-moments}}
{\em
Let $f: D \to \R_+$ be a log-concave density with mean $\mu$ and standard deviation $\sigma$. 
\new{Assume that either $D=\R$ or that $\sigma$ is sufficiently large.}
Let $g: D \to \R_+$ be a density with $\dtv(f, g) \leq 1/10$.
Given an explicit description of $g$, we can efficiently compute values $\tilde \mu$ and $\tilde \sigma$ so that
$|\mu-\tilde\mu| \leq 2\sigma$ and $3\sigma/10 \leq \tilde \sigma \leq 6\sigma.$
}

\medskip

\begin{proof}
We define $\tilde \mu$ to be the median of $g$
and $\tilde \sigma$ to be the difference between the $25^{th}$ and $75^{th}$ percentiles of $g$.
Since $f$ and $g$ are within total variation distance $1/10$, it follows that their Kolmogorov distance
(i.e., the maximum distance between their cumulative distribution functions) is at most $1/10$.
This implies that $\tilde \mu$ lies between the $40^{th}$ and $60^{th}$ percentiles of $f$.
By Cantelli's inequality, we have that $\Pr_{X \sim f}[X - \mu \geq 2 \sigma] \leq 1/5$
and $\Pr_{X \sim f}[X - \mu \leq -2 \sigma] \leq 1/5$. Thus, $|\mu - \tilde \mu| \leq 2 \sigma$.

Similarly, $\tilde \sigma$ lies between (a) the difference between the $65^{th}$ and $35^{th}$ percentile of $f$
and (b) the difference between the $85^{th}$ and $15^{th}$ percentile of $f$.
By Cantelli's inequality, we have that
$\Pr_{X \sim f}[X - \mu \geq 3 \sigma] \leq 1/10$ and
$\Pr_{X \sim f}[X - \mu \leq -3 \sigma] \leq 1/10$.
Thus, $\tilde \sigma \leq 6 \sigma$. For the other direction, note that
$3/10$ of the probability mass of $f$ lies between the $35^{th}$ and $65^{th}$ percentile.
Since the maximum value of $f$ is at most $1/\sigma$, by Lemma~\ref{lem:lc-facts} (i),
we conclude that the difference between the $65^{th}$ and $35^{th}$ percentiles is at least $3\sigma/10$.
\end{proof}

\subsection{Proof of Claim \ref{clm:tail-bound}} \label{sec:app-claim1}
\medskip

\noindent {\bf Claim \ref{clm:tail-bound}}
{\em
Let $f$ be a log-concave density with mean $\mu$ and standard deviation $\sigma$.
Let $\alpha \leq \mu - \Omega(\sigma(1+ \log(1/\eps)))$ and
$\beta \geq \mu + \Omega(\sigma(1+ \log(1/\eps))))$.
Then, $\|f\|_1^{(-\infty,\alpha)} \leq \eps$ and $\|f\|_1^{(\beta,\infty)} \leq \eps.$
}
\medskip

\begin{proof}
By Lemma~\ref{lem:lc-facts} (ii), we have
$f(x) \leq \exp\left(1-\frac{|x-\mu|}{8e\sigma}\right)/\sigma$.
In the case $D = \R$, we have
$\int_{-\infty}^{\alpha} f(x) dx \leq \int_{-\infty}^{\alpha} \exp\left(1-\frac{|x-\mu|}{8e\sigma}\right)/\sigma dx
= 8e \sigma \exp\left(1-\frac{|{\alpha}-\mu|}{8e\sigma}\right)$.
This is at most $\eps$ when $|{\alpha}-\mu| \geq 8e \sigma(1+ \ln(8e/\eps))$, 
which holds by our bounds on $|\alpha-\mu|$.

In the case $D = \Z$, we have
$\sum_{-\infty}^{{\alpha}-1} f(x) \leq \sum_{-\infty}^{{\alpha}-1} \exp\left(1-\frac{|x-\mu|}{8e\sigma}\right)/\sigma$.
Since $\exp\left(1-\frac{|x-\mu|}{8e\sigma}\right)/\sigma$ is monotonically increasing on $(\infty,{\alpha}]$,
we have that $\exp\left(1-\frac{|x-\mu|}{8e\sigma}\right)/\sigma \leq \int_x^{x+1} \exp\left(1-\frac{|y-\mu|}{8e\sigma}\right)/\sigma dy$
for $x \leq {\alpha}-1$.
Thus, we can bound this sum by the same integral to the one for the continuous case, i.e.,
$
\sum_{-\infty}^{{\alpha}-1} \exp\left(1-\frac{|x-\mu|}{8e\sigma}\right)/\sigma
\leq \int_{-\infty}^{{\alpha}} \exp\left(1-\frac{|x-\mu|}{8e\sigma}\right)/\sigma dx \leq \eps \;.
$
A symmetric argument yields that the probability mass of $f$ on $(\beta,\infty)$ is $O(\eps)$.
\end{proof}

\subsection{Proof of Theorem~\ref{approxThrm}} \label{sec:thm4}
\medskip

\noindent {\bf Theorem \ref{approxThrm}}
{\em
If $f$ is a log-concave density on either $\R$ or $\Z$,
and $\epsilon>0$, there exists a piecewise linear distribution
$g$ with $O(\eps^{-1/2})$ interval pieces so that $\dtv(f,g) \leq \eps$.
}
\medskip

We begin by proving this in the case where the range of $f$ and the logarithmic derivative of $f$ are both relatively small.

\begin{lemma}\label{piecewiseApproxLem}
Let $f$ be a log-concave function defined on an interval $I$ in either $\R$ or $\Z$.
Suppose furthermore, that the range of $f$ is contained in an interval of the form $[a,2a]$ for some $a$,
and that the logarithmic derivative of $f$ (or the log-finite difference of $f$ in the discrete case)
varies by at most $1/|I|$ on $I$. Then there exists a piecewise linear function $g$
on $I$ with $O(\eps^{-1/2})$ pieces so that $\|f-g\|_1 \leq O(\eps\|f\|_1).$
\end{lemma}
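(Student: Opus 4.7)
\textbf{Proof plan for Lemma~\ref{piecewiseApproxLem}.} The plan is to partition $I$ into $k=\Theta(\eps^{-1/2})$ equal subintervals and let $g$ be the piecewise linear interpolant of $f$ at the subinterval endpoints. Since the range of $f$ is in $[a,2a]$, we write $f=\exp(\phi)$ with $\phi$ concave and $\mathrm{osc}_I(\phi)\le \log 2$. In particular, writing $L=|I|/k$, we have $\|f\|_1 \geq a|I|$, so it suffices to bound the pointwise error on each subinterval by $O(\eps a)$ on average (times $L$).

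The error on a subinterval $I_j=[y_j,y_{j+1}]$ will be split as $f-g = (f - \tilde f_j) + (\tilde f_j - g)$, where $\tilde f_j := \exp(\tilde \phi_j)$ and $\tilde \phi_j$ is the affine function interpolating $\phi$ at $y_j, y_{j+1}$. For the first term, concavity of $\phi$ gives $0 \le \phi(x) - \tilde\phi_j(x) \le L\cdot V_j$ for $x\in I_j$, where $V_j := \phi'(y_j^+) - \phi'(y_{j+1}^-)$ is the variation of $\phi'$ on $I_j$ (with the obvious discrete-difference analog when $D=\Z$). Since $f\le 2a$ and $e^t-1 = O(t)$ for bounded $t$, integrating over $I_j$ yields $\|f-\tilde f_j\|_1^{I_j} \le O(a L^2 V_j)$. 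Summing over $j$ and using monotonicity of $\phi'$ (so $\sum_j V_j \le 1/|I|$, by hypothesis) gives a total contribution of $O(aL^2/|I|) = O(a|I|/k^2)$.

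For the second term, $\tilde f_j(x)=\exp(\alpha_j+\beta_j x)$ is an exponential with $|\beta_j L| \le \mathrm{osc}_{I_j}(\phi) \le \log 2$. Since the range of $\tilde f_j$ sits in $[a,2a]$, a Taylor expansion of the exponential around the midpoint (or equivalently a direct computation of $e^{u}$ against its chord on $[-\beta L/2,\beta L/2]$) shows $|\tilde f_j(x)-g(x)| \le O((\beta_j L)^2)\cdot a$ pointwise, hence $\|\tilde f_j-g\|_1^{I_j} \le O(aL\,(\beta_j L)^2)$. Summing over $j$ and crudely bounding $(\beta_j L)^2 \le (\log 2/k)^2$ (since $\sum_j|\beta_j|L \le \mathrm{osc}_I(\phi) \le \log 2$, giving $|\beta_j|L \le \log 2$ in the worst case) yields a total of $O(a|I|/k^2)$. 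Combining the two bounds and choosing $k = \lceil C\eps^{-1/2}\rceil$ for a suitable constant gives $\|f-g\|_1 \le O(a|I|/k^2) \le O(\eps\, a|I|) \le O(\eps\|f\|_1)$.

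The main bookkeeping obstacle is the second term: we need that on each $I_j$ the local exponential rate $\beta_j$ is small enough to get a uniform $O(1/k^2)$ relative error per piece, and we cannot rely on $\sum_j \beta_j^2$ being small (only $\sum_j |\beta_j| \le O(1/|I|)$ is guaranteed by $\mathrm{osc}_I\phi = O(1)$). This is exactly where the range hypothesis $f(I)\subseteq[a,2a]$ is used: it bounds $|\beta_j| = O(1/|I|)$ uniformly over pieces, so $(\beta_j L)^2 = O(1/k^2)$ for every $j$, and the sum telescopes correctly. The discrete case $D=\Z$ is handled identically after replacing $\phi'$ by the log finite-difference $\phi(x+1)-\phi(x)$ and integrals by sums; the concavity-based bound on $\phi - \tilde\phi_j$ and the Taylor bound on the exponential both go through verbatim.
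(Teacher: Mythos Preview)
Your approach is correct and differs from the paper's in two interesting ways. The paper uses an \emph{adaptive} partition: it places a breakpoint every $\eps^{1/2}|I|$ in length \emph{and} every time the logarithmic derivative crosses a multiple of $\eps^{1/2}/|I|$, so that on each piece both the length and the variation of $\phi'$ are individually small; it then Taylor-expands $f(x)=f(y)\exp((x-y)\alpha)$ directly to a linear function with $O(\eps)$ pointwise error. Your argument instead uses a \emph{uniform} partition and the two-step decomposition $f-g=(f-\tilde f_j)+(\tilde f_j-g)$ through the exponential interpolant. This is a clean alternative: the first term is controlled by summing the local variations $V_j$ (so you never need a per-piece bound on $V_j$), and the second by the uniform bound $|\beta_j|=O(1/|I|)$. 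Both routes ultimately rely on the same fact, which the paper states explicitly: the two hypotheses together force $|\phi'|=O(1/|I|)$ everywhere on $I$.

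Two small points to tighten in your write-up. First, the parenthetical ``$\sum_j|\beta_j|L\le\mathrm{osc}_I(\phi)\le\log 2$, giving $|\beta_j|L\le\log 2$ in the worst case'' does not justify $(\beta_j L)^2\le(\log 2/k)^2$; you should drop it, since you correctly supply the real argument in the next paragraph. Second, the uniform bound $|\beta_j|=O(1/|I|)$ does \emph{not} follow from the range hypothesis alone (concavity plus $\mathrm{osc}(\phi)\le\log 2$ permits arbitrarily large slopes near an endpoint). You need both hypotheses: the average of $\phi'$ over $I$ has absolute value at most $\log 2/|I|$, and since $\phi'$ varies by at most $1/|I|$, every value of $\phi'$ lies within $(1+\log 2)/|I|$ of zero. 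With that correction your argument goes through as written.
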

\begin{proof}
By scaling $f$, we may assume that $a=1$. Note that the log-derivative or log-finite difference of $f$ must be $O(1/|I|)$ everywhere. We now partition $I$ into subintervals $J_1,J_2,\ldots,J_n$ so that on each $J_i$ has length at most $\eps^{1/2}|I|$ and the logarithmic derivative (or finite difference) varies by at most $\eps^{1/2}/|I|$. Note that this can be achieved with $n=O(\eps^{-1/2})$ by placing an interval boundary every $O(\eps^{1/2}|I|)$ distance as well as every time the logarithmic derivative passes a multiple of $\eps^{1/2}/|I|$.

We now claim that on each interval $J_i$ there exists a linear function $g_i$ so that $\|g_i-f\|_\infty = O(\eps)$. Letting $g$ be $g_i$ on $J_i$ will complete the proof.

Let $J_i = [y,z]$. We note that for $x\in J_i$ that
$$
f(x) = f(y)\exp((x-y)\alpha)
$$
for some $\alpha$ in the range spanned by the logarithmic derivative (or log finite difference) of $f$ on $J_i$. Letting $\alpha_0$ be some number in this range, we have that
\begin{align*}
f(x)&=f(y)\exp((x-y)\alpha_0 + (x-y)(\alpha-\alpha_0))\\
& =f(y)\exp((x-y)\alpha_0)\exp(O(\eps^{1/2}|I|)O(\eps^{1/2}/|I|))\\
&=(1+O(\eps))f(y)\exp((x-y)\alpha_0) \;.
\end{align*}
Noting that $(x-y)\alpha_0 = O(\eps^{1/2}|I|)O(1/|I|)=O(\eps^{1/2}),$ this is
$$
(1+O(\eps))f(y)(1+(x-y)\alpha_0+O((x-y)\alpha_0)^2)=(1+O(\eps))(f(y)+(x-y)\alpha_0+O(\eps)) = f(y)+(x-y)\alpha_0+O(\eps).
$$
Therefore, taking $g_i(x) = f(y)+(x-y)\alpha_0$ suffices. This completes the proof.
\end{proof}

Next, we need to show that we can partition the domain of $f$ into intervals $I$ satisfying the above properties.
\begin{proposition}\label{intervalDivisionProp}
Let $f$ be a log-concave distribution on either $\R$ or $\Z$. Then there exists a partition of $\R$ or $\Z$ into disjoint intervals $I_1,I_2,\ldots$ so that
\begin{itemize}
\item $f$ satisfies the hypotheses of Lemma \ref{piecewiseApproxLem} on each $I_i$.
\item For each $m$, there are only $O(m)$ values of $i$ so that $f(I_i) > 2^{-m}$.
\end{itemize}
\end{proposition}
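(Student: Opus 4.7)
My plan is to take the common refinement of two partitions: the dyadic level sets of $f$, and the dyadic level sets of $|\phi'|$, where $\phi = \log f$. Fix a mode $\mu$ of $f$, let $M_f = f(\mu)$ and $\tau = 2/M_f$ (so $\tau = \Theta(\sigma)$ by Lemma~\ref{lem:lc-facts}; the case $D=\Z$ with $\sigma$ below that lemma's constant threshold has support of size $O(1)$ and is handled trivially by singleton pieces). I treat the right half $(\mu,\infty)$ in detail; the left is symmetric. Define $B_k^R = \{x > \mu : \phi(x) \in (\log M_f - (k+1)\log 2,\, \log M_f - k\log 2]\}$ for $k \geq 0$, the low-derivative region $E^R = \{x > \mu : |\phi'(x)| < 1/\tau\}$, and the dyadic derivative levels $C_j^R = \{x > \mu : |\phi'(x)| \in [2^j/\tau,\, 2^{j+1}/\tau)\}$ for $j \geq 0$. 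Each of these is a single interval by unimodality of $f$ and by monotonicity of $\phi'$ on $(\mu,\infty)$, and the pieces of my partition are the nonempty intersections $E^R \cap B_k^R$ and $C_j^R \cap B_k^R$.

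Next I verify both hypotheses of Lemma~\ref{piecewiseApproxLem}. The range bound is immediate because $\phi$ varies by at most $\log 2$ on any $B_k^R$. For the log-derivative bound, on $C_j^R \cap B_k^R$ the derivative varies by at most $2^j/\tau$, and $|\phi'| \geq 2^j/\tau$ together with $\int|\phi'|\leq \log 2$ bounds the length by $L \leq \tau\log 2/2^j$, giving $L \cdot \Delta\phi' \leq \log 2 < 1$. On $E^R \cap B_k^R$ the derivative varies by at most $1/\tau$, so it suffices to prove $L_k := |B_k^R| \leq \tau$. I do this in two steps: (i) $L_0 \leq \tau$, because $f \geq M_f/2$ on $B_0^R$ and the total mass is at most $1$; and (ii) $L_{k+1} \leq L_k$, because concavity of $\phi$ forces $|\phi'|$ to be at least its mean value $\log 2/L_k$ at the right endpoint of $B_k^R$, hence $|\phi'| \geq \log 2/L_k$ throughout $B_{k+1}^R$, so $\int_{B_{k+1}^R}|\phi'| = \log 2$ yields $L_{k+1} \leq L_k$.

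The mass bounds now follow: $f(E^R \cap B_k^R) \leq f(B_k^R) \leq M_f 2^{-k} L_k = O(2^{-k})$, and $f(C_j^R \cap B_k^R) \leq M_f 2^{-k} \cdot \tau \log 2/2^j = O(2^{-(k+j)})$. Assigning a heaviness index $k$ to pieces in $E^R$ and $k+j$ to pieces in $C_j^R$, a piece can have mass greater than $2^{-m}$ only if its index is at most $m + O(1)$. Walking the partition from $\mu$ outward, every boundary increments the heaviness index by exactly $1$, with the single exception of the $E^R \to C_0^R$ transition, at which the index is preserved. Consequently the right side contributes $O(m)$ heavy pieces, and the symmetric argument on the left completes the count. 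The most delicate step is the monotonicity of $L_k$, which is the essential place where log-concavity beyond mere unimodality enters the argument; all other estimates are straightforward manipulations of the two dyadic decompositions.
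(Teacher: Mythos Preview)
Your argument is correct and takes a genuinely different route from the paper's. The paper proceeds greedily: after splitting at the mode, it lets each $I_i$ be the longest interval starting at $a_i$ on which the hypotheses of Lemma~\ref{piecewiseApproxLem} hold, and then argues that at each step either $f(a_{i+1})\le f(a_i)/2$ or the absolute logarithmic derivative at $a_{i+1}$ exceeds that at $a_i$ by a constant factor; the $O(m)$ count follows because only $O(m)$ steps can occur before one of these quantities forces $f(I_i)\le 2^{-m}$. Your construction is instead fully explicit: you write the pieces down as intersections of dyadic level sets of $f$ with dyadic level sets of $|\phi'|$, and your ``heaviness index'' walk is a particularly clean way to read off the $O(m)$ bound. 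What the paper buys is that the greedy choice makes the Lemma~\ref{piecewiseApproxLem} hypotheses automatic and the doubling dichotomy immediate; what your approach buys is that the partition is given by a closed-form description, and the counting reduces to a monotone index along the real line. The two arguments are morally the same dichotomy (value drops versus slope steepens) organized differently.

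One small correction: your aside that the discrete case with small $\sigma$ ``has support of size $O(1)$'' is false (a geometric distribution with small parameter has arbitrarily small variance but infinite support). Fortunately you never use this; your main argument runs entirely in terms of $\tau=2/M_f$ and never needs $\tau=\Theta(\sigma)$, so you can simply drop the special handling and let the dyadic argument cover all cases uniformly. In the discrete setting the inequality $L_{k+1}\le L_k$ holds only up to an additive $O(1)$ because $\phi$ need not hit the dyadic thresholds exactly, but since $E^R$ can meet only $O(1)$ of the $B_k^R$ (on $E^R$ the drop of $\phi$ is at most $|E^R|/\tau$, and $|E^R\cap B_0^R|\le L_0\le\tau$ forces this drop to be $O(1)$), this is harmless for the bound $|E^R\cap B_k^R|\le\tau$ that you actually need.
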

\begin{proof}
Firstly, by splitting the domain of $f$ into two pieces separated by the modal value, we may assume that $f$ is monotonic. Henceforth, we assume that $f$ is defined on $\R^+$ or $\Z^+$ and that $f$ is both log-concave and monotonically decreasing.

We define the intervals $I_i=[a_i,b_i]$ inductively. We let $a_1=0$. Given $a_i$, we let $b_i$ be the largest possible value so that $f$ restricted to $[a_i,b_i]$ satisfies the hypotheses of Lemma \ref{piecewiseApproxLem}. Given $b_i$ we let $a_{i+1}$ be either $b_i$ (in the continuous case) or $b_i+1$ (in the discrete case). Note that this causes the first condition to hold automatically.

We note that for each $i$, either $f(a_{i+1})\leq f(a_i)/2$ or the logarithmic derivative of $f$ at $a_{i+1}$ is less than the logarithmic derivative at $a_i$ by at least $1/(a_{i+1}-a_i)$. Note that in the latter case, since $f(a_{i+1})>f(a_i)/2$, we have that the absolute value of the logarithmic derivative at $f(a_i)$ is at most $O(1/(a_{i+1}-a_i))$. Therefore, in this latter case, the absolute value of the logarithmic derivative of $f$ at $a_{i+1}$ is larger than the absolute value of the logarithmic derivative at $a_i$ by at least a constant multiple.

Note that at the end of the first interval, we have that $f(a_2) = O(1/|I_1|)$ and that the absolute logarithmic derivative of $f$ at $a_2$ is at least $\Omega(1/|I_1|)$. Note that each interval at least one of these increases by a constant multiple, therefore, there are only $O(m)$ many $i$ so that both $f(a_i) > 2^{-m}/|I_1|$ and the absolute logarithmic derivative of $f$ at $a_i$ is less than $2^m/|I_1|$. We claim that if either of these fail to hold that the integral of $f$ over $I_i$ is $O(2^{-m})$.

If $f(a_i)<2^{-m}/|I_1|$, then since the absolute logarithmic derivative of $f$ on $I_i$ is at least $\Omega(1/|I_1|)$, we have that the length of $I_i$ is $O(|I_1|)$. Therefore, the mass of $f$ on $I_i$ is $O(2^{-m}).$

If on the other hand the absolute logarithmic derivative of $f$ at $a_i$ is at least $2^m/|I_1|$, since the value if $f$ on $I_i$ varies by at most a multiple of $2$, we have that $|I_i| = O(|I_1|/2^m)$. Since $f$ is decreasing, is has size $O(1/|I_1|)$ on $I_i$, and therefore, the integral of $f$ of $I_i$ is $O(2^{-m})$. This completes the proof of the second condition.
\end{proof}

We are now prepared to prove our Theorem~\ref{approxThrm}:

\begin{proof}
We divide $\R$ or $\Z$ into intervals as described in Proposition \ref{intervalDivisionProp}. Call these intervals $I_1,I_2,\ldots$ sorted so that $f(I_i)$ is decreasing in $i$. Therefore, we have that $f(I_m) = O(2^{-\Omega(m)}).$ In particular, there is a constant $c>0$ so that $f(I_m)=O(2^{-cm})$.

For $m=1,\ldots,2\log(1/\eps)/c$, we use Lemma \ref{piecewiseApproxLem} to approximate $f$ in $I_m$ by a piecewise linear function $g_m$ so that $g_m$ has at most $O(\eps^{-1/2}2^{-cm/4})$ pieces and so that the $L_1$ distance between $f$ and $g_m$ on $I_m$ is at most $f(I_m)O(\eps2^{cm/2})= O(\eps2^{-cm/2}).$ Let $g$ be the piecewise linear function that is $g_m$ on $I_m$ for $m\leq c\log(1/\eps)/2$, and $0$ elsewhere. $g$ is piecewise linear on
$$
\sum_{m=1}^{2\log(1/\eps)/c} O(\eps^{-1/2}2^{-cm/4}) = O(\eps^{-1/2})
$$
intervals.

Furthermore the $L_1$ error between $f$ and $g$ on the $I_m$ with $m\leq 2\log(1/\eps)/c$ is at most
$$
\sum_{m=1}^{2\log(1/\eps)/c} O(\eps2^{-cm/2}) = O(\eps).
$$

The $L_1$ error from other intervals is at most
$$
\sum_{m=2\log(1/\eps)/c}^\infty O(2^{-cm}) = O(\eps).
$$
Therefore, $\|f-g\|_1=O(\eps)$.

By replacing $g$ by $\max(g,0)$, we may ensure that it is positive (and at most double the number of pieces and decrease the distance from $f$). By scaling $g$, we may then ensure that it is a distribution. Finally by decreasing $\eps$ by an appropriate constant, we may ensure that $\dtv(f,g)\leq \eps.$ This completes the proof.
\end{proof}

\subsection{Proof of Claim~\ref{claim:compute-error}} \label{sec:claim2}
\medskip

\noindent {\bf Claim \ref{claim:compute-error}}
{\em Let $g(x)=ax+b$ and $h(x)=c\exp(dx)$.
Let $I=[x', x'+L]$ be an interval with $g(x) \geq 0$
and $0 \leq h(x) \leq O(\eps/L)$ for all $x \in I$.
There is an algorithm which approximates
$\|g-h\|_1^I$ to within an additive $O(\eps^2/\log(1/\eps))$ in time $\polylog(1/\eps)$.}

\medskip

\begin{proof}
First, we claim that for any subinterval $I' \subseteq I$,
we can compute $\|h\|_1^{I'}$ and $\|g\|_1^{I'}$
to within $O(\eps^2/\log(1/\eps))$ in time $\polylog(1/\eps)$.
There are simple closed formulas for the integrals of these
and for the sum of arithmetic and geometric series.
The formula for the sum of a geometric series may
have a cancellation issue when the denominator $1-\exp(d)$
is small but note that when $|d|=O(\eps^2/\log(1/\eps))$
we can approximate the sum of $h(x)$ over $I$ by its integral.
These can all be computed in $\polylog(1/\eps)$ time.

Now it remains to approximate any crossing points,
i.e., points $x$ where $g(x)=h(x)$ for $x' \leq x \leq x'+L$
(which need not satisfy $x \in D$).
If we find these with sufficient precision,
then we can divide $I$ and calculate
$\|h\|_1^{I'}$ and $\|g\|_1^{I'}$ for each sub-interval $I'$
to get the result. Note that $g(x)$ and $h(x)$
can have at most two crossing points, since there is at most one
$x \in \R$ where the derivative of $h(x)-g(x)$ is $0$.
This can be calculated as $x^{\ast}= \ln(a/(cd))/d$, when $a/(cd) > 0$.
If $x^{\ast}$ lies in $I$ we can subdivide and reduce to the case
when there is a crossing point only if $g(x)-h(x)$
has different signs at the endpoints. In this case,
if $g(x) = \Omega(\eps/L)$ at one endpoint,
we can find a point at which $g(x)=\Theta(\eps/L)$,
which is higher than our bound on $h(x)$, and we can divide there.

Thus, we can reduce to the case where there is exactly
one crossing point in $I$ and $h(x), g(x) \leq O(\eps/L)$.
By performing $O(\log(1/\eps))$ bisections we can approximate
this crossing point to within $O(\eps L/\log(1/\eps)$
(or $\max \{1,O(\eps^2 L/\log(1/\eps)\}$ when $D=\Z$).
Then, we have that if $J$ is the interval between the true
crossing point and our estimate, then
$\|g-h\|_1^J \leq \|g\|_1^J+\|h\|_1^J = O(L\eps/\log(1/\eps) \cdot \eps/L) = O(\eps^2/\log(1/\eps)$.
Hence, if we divide here, each of the sub-intervals $I'$ has
$\|h-g\|_1^{I'}=|\|h\|_1^{I'} - \|g\|_1^{I'}| + O(\eps^2/\log(1/\eps))$.

Note that in all of the above cases, we only sub-divide $I$ into at most $O(1)$
sub-intervals, and so it takes $\polylog(1/\eps)$ time to compute $\|g-h\|_1^I$ for the whole interval.
\end{proof}

\end{document}